\newtheorem{thm}{Theorem}[section]
\newtheorem{prop}[thm]{Proposition}
\newtheorem{lem}[thm]{Lemma}
\theoremstyle{remark}
\newtheorem{rem}[thm]{Remark}
\newcommand{\FF}{\mathbb{F}}
\newcommand{\ZZ}{\mathbb{Z}}
\newcommand{\0}{\mathbf{0}}
\newcommand{\1}{\mathbf{1}}
\DeclareMathOperator{\wt}{wt}
\begin{document}
\title{On the minimum weights of binary LCD codes and ternary LCD codes
}

\author{
Makoto Araya\thanks{Department of Computer Science,
Shizuoka University,
Hamamatsu 432--8011, Japan.
email: {\tt araya@inf.shizuoka.ac.jp}},
Masaaki Harada\thanks{
Research Center for Pure and Applied Mathematics,
Graduate School of Information Sciences,
Tohoku University, Sendai 980--8579, Japan.
email: {\tt mharada@tohoku.ac.jp}.}
and 
Ken Saito\thanks{
Research Center for Pure and Applied Mathematics,
Graduate School of Information Sciences,
Tohoku University, Sendai 980--8579, Japan.
email: {\tt kensaito@ims.is.tohoku.ac.jp}.}
}


\maketitle

\begin{abstract}
Linear complementary dual (LCD) codes are linear codes that intersect
with their dual codes trivially.
We study the largest minimum weight
$d_2(n,k)$ among all binary LCD $[n,k]$ codes and
the largest minimum weight
$d_3(n,k)$ among all ternary LCD $[n,k]$ codes.
The largest minimum weights $d_2(n,5)$ and $d_3(n,4)$ are
partially determined.
We also determine the largest minimum weights 
$d_2(n,n-5)$, $d_3(n,n-i)$ for $i \in \{2,3,4\}$,
and $d_3(n,k)$ for $n \in \{11,12,\ldots,19\}$.
\end{abstract}

\section{Introduction}\label{sec:1}
Linear complementary dual (LCD for short) codes are linear codes that intersect
with their dual codes trivially.
LCD codes were introduced by Massey~\cite{Massey} and gave an optimum linear
coding solution for the two user binary adder channel.
Recently, much work has been done concerning LCD codes
for both theoretical and practical reasons
(see e.g.~\cite{AH-C}, \cite{AH}, \cite{AHS2}, \cite{CG},
\cite{CMTQ}, \cite{CMTQP}, \cite{DKOSS}, \cite{GKLRW}, \cite{HS},
\cite{Sok} and the references given therein).
In particular, we emphasize the recent work by
Carlet, Mesnager, Tang, Qi and Pellikaan~\cite{CMTQP}.
It has been shown in~\cite{CMTQP} that
any code over $\FF_q$ is equivalent to some LCD code
for $q \ge 4$, where $\FF_q$ denotes the finite field of order $q$.
This motivates us to study binary LCD codes and ternary LCD codes.

It is a fundamental problem to determine the largest minimum weights
$d_2(n,k)$ (resp.\ $d_3(n,k)$)
among all binary (resp.\ ternary) LCD
$[n,k]$ codes for a given pair $(n,k)$.
For arbitrary $n$,
the largest minimum weights $d_2(n,2)$ and $d_2(n,3)$ were determined
in~\cite{GKLRW} and~\cite{HS}, respectively.
Very recently, by considering the simplex codes,
a characterization of LCD codes
having large minimum weights
has been given by the authors~\cite{AHS2}.
Using the characterization,
the largest minimum weights 
$d_2(n,4)$, $d_3(n,2)$ and $d_3(n,3)$
have been determined in~\cite{AHS2} for arbitrary $n$
(see also~\cite{PZK} for $d_3(n,2)$).
As a contribution in this direction,
this paper studies the largest minimum weights 
$d_2(n,5)$ and $d_3(n,4)$.
We show the nonexistence of certain LCD codes meeting
the Griesmer bound.
This is a powerful tool for our study of $d_2(n,5)$ and $d_3(n,4)$.
The minimum weights $d_2(n,k)$ were determined for
arbitrary $n$ and $k=n-1$ in~\cite{DKOSS} and
for arbitrary $n$ and $k \in \{n-2,n-3,n-4\}$ in~\cite{AH}.
For arbitrary $n$,
the classifications of ternary LCD $[n,1]$ codes and
ternary LCD $[n,n-1]$ codes were done in~\cite{AH-C}.
The classification of 
ternary LCD $[n,k]$ codes  was also done in~\cite{AH-C} for $n \le 10$.
In this paper, we determine $d_2(n,n-5)$ and 
$d_3(n,n-i)$ $(i \in \{2,3,4\})$ for arbitrary $n$.
We also determine $d_3(n,k)$ for $n \in \{11,12,\ldots,19\}$.
Note that $d_2(n,k)$ is known for $n \le 24$ 
(see~\cite[Table~15]{AH}, \cite[Table~1]{GKLRW} and~\cite[Table~3]{HS}).

This paper is organized as follows.
In Section~\ref{sec:2}, 
we give some definitions, notations and basic results used in this paper.
We review the characterization of LCD codes in~\cite{AHS2}
(Proposition~\ref{prop:main2}).
In Section~\ref{sec:Gb}, we show that there is no certain LCD code meeting
the Griesmer bound (Lemma~\ref{lem:Gb}).
By using Lemma~\ref{lem:Gb},
it is shown that
$d_2((2^k-1)s+k+1,k)=
2^{k-1}s+2$  if $k$ is even with $k \ge 4$
and $2^{k-1}s+1$  if $k$ is odd  with $k \ge 3$
for a positive integer $s$ (Proposition~\ref{prop:sk-1}).
In addition, by Lemma~\ref{lem:Gb}, we determine the
largest minimum weights
$d_2(127s+15,7)$,
$d_2(127s+23,7)$ and $d_2(511s+17,9)$
for a nonnegative integer $s$ (Proposition~\ref{prop:F2}).
Proposition~\ref{prop:main2} and Lemma~\ref{lem:Gb}
are powerful tools for our study of $d_2(n,5)$ and $d_3(n,4)$
in Sections~\ref{sec:2-5} and~\ref{sec:3-4}.
In Section~\ref{sec:2-5},
we determine the largest minimum weights $d_2(n,5)$ for
\[
n \equiv 0,1,6,9,13,15,17,21,23,24,25,27,28,29,30 \pmod{31}
\]
(Theorem~\ref{thm:2-5}).
For the remaining lengths, we also give bounds on $d_2(n,5)$.
In Section~\ref{sec:3-4},
we determine the largest minimum weights $d_3(n,4)$ for
\begin{multline*}
 n \equiv 4,5,6,7,8,10,11,14,16,17,19,20,24,\\
 26,27,29,30,33,35,36,
38,39 \pmod{40}
\end{multline*}
(Theorem~\ref{thm:3-4}).
For the remaining lengths, we also give bounds on $d_3(n,4)$.
In Section~\ref{sec:d}, we determine
the largest minimum weights $d_3(n,k)$ for $n \in \{11,12,\ldots,19\}$.
As a consequence, we determine the largest minimum weights $d_3(n,k)$,
where
\[
(n,k) \in 
\left\{
\begin{array}{l}
( 121s+ 11, 5),
( 364s+ 12, 6),
(364s+13,6),   \\
(1093s+ 13, 7),
(1093s+14,7),  
(3280s+ 14, 8),\\
(121s+ 15, 5),
(121s+17,5),
(364s+18,6)
\end{array}
\right\}
\]
for a nonnegative integer $s$.
Finally, in Section~\ref{sec:large},
we examine the largest minimum weights $d_2(n,n-i)$ and 
$d_3(n,n-i)$ for small $i$ and arbitrary $n$.
In particular, we completely determine
$d_2(n,n-5)$ and $d_3(n,n-i)$ for $i\in\{2,3,4\}$ and arbitrary $n$.

All computer calculations in this paper were
done by programs in the language C
and
programs in {\sc Magma}~\cite{Magma}.

\section{Preliminaries}\label{sec:2}
In this section, 
we give some definitions, notations and basic results used in this
paper.
Lemma~\ref{lem:extend} is an important method for constructing
LCD codes with large minimum weights, which is used
throughout this paper.
Proposition~\ref{prop:main2} is a powerful tool for our study
in Sections~\ref{sec:2-5} and~\ref{sec:3-4}.

\subsection{Definitions and notations}

Let $\FF_q$ denote the finite field of order $q$,
where $q$ is a prime power.
A linear $[n,k]$ code over $\FF_q$ is a $k$-dimensional vector 
subspace of $\FF_q^n$.
All codes in this paper are linear, and 
codes means are linear codes.
Codes over $\FF_2$ and $\FF_3$ are called {\em binary} and
{\em ternary}, respectively.
The {\em weight} $\wt(x)$ of a vector $x \in \FF_q^n$ is
the number of non-zero components of $x$.
A vector of an $[n,k]$ code $C$ over $\FF_q$
is called a {\em codeword}.
The minimum non-zero weight of all codewords in $C$ is called
the {\em minimum weight} of $C$.
An $[n,k]$ code with minimum weight $d$ is called an $[n,k,d]$ code.
Two codes $C$ and $C'$ over $\FF_q$ are {\em equivalent} 
if there is a monomial matrix $P$ with 
$C' = C \cdot P$, where
$C \cdot P = \{ x P \mid x \in C\}$.
For any $[n,k,d]$ code over $\mathbb{F}_q$, it is known that
$
n\ge \sum_{i=0}^{k-1} \left\lceil \frac{d}{q^i}\right\rceil$.
This bound is well known as the {\em Griesmer bound}.
For a given set of parameters $q,n,k$,
define
\[g_q(n,k)=\max\left\{d \in \mathbb{Z}_{\ge 0}~\middle|~ n
\ge \sum_{i=0}^{k-1} \left\lceil \frac{d}{q^i}\right\rceil\right\},\]
where $\ZZ_{\ge 0}$ denotes the set of nonnegative integers.

The {\em dual} code $C^{\perp}$ of an $[n,k]$ code $C$ 
over $\FF_q$ is defined as
$
C^{\perp}=
\{x \in \FF_q^n \mid \langle x,y\rangle = 0 \text{ for all } y \in C\},
$
where $\langle x,y\rangle = \sum_{i=1}^{n} x_i {y_i}$
for $x=(x_1,x_2,\ldots,x_n), y=(y_1,y_2,\ldots,y_n) \in \FF_q^n$.
The minimum weight of the dual code $C^\perp$ of a code $C$ is called
the {\em dual distance} of $C$ and it is denoted by $d^\perp$.
A code $C$ is {\em self-orthogonal} if $C \subset C^\perp$.

A code $C$ over $\FF_q$ is called 
{\em linear complementary dual} (LCD for short)
if $C \cap C^\perp = \{\0_n\}$, 
where  $\0_n$ denotes the zero vector of length $n$.
Let $d_q(n,k)$ denote the largest minimum weight among all LCD $[n,k]$
codes over $\FF_q$.
The following characterization is due to Massey~\cite{Massey}.

\begin{prop}
Let $C$ be a code over $\FF_q$ with generator matrix $G$.
Then the following properties are equivalent:
\begin{enumerate}
\item $C$ is LCD,
\item $C^\perp$ is LCD,
\item $GG^T$ is nonsingular,
where
$G^T$  denotes the transpose of a matrix $G$.
\end{enumerate}
\end{prop}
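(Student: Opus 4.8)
The plan is to prove the single equivalence (i) $\Leftrightarrow$ (iii) by an explicit rank computation, and then to dispose of (ii) almost for free. First I would fix a $k \times n$ generator matrix $G$ of $C$, so that $G$ has full row rank $k$ and every codeword has the form $aG$ for a row vector $a \in \FF_q^k$. The starting point is the observation that a codeword $aG$ lies in $C^\perp$ exactly when it is orthogonal to every row of $G$, that is, when $(aG)G^T = a(GG^T) = \0_k$. Hence
\[
C \cap C^\perp = \{\, aG \mid a \in \FF_q^k,\ a(GG^T)=\0_k \,\},
\]
so the intersection is the image under the map $a \mapsto aG$ of the left kernel of the $k \times k$ matrix $GG^T$.

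Next I would use that $G$ has rank $k$, which makes the linear map $a \mapsto aG$ injective. Consequently this map restricts to an isomorphism from the left kernel of $GG^T$ onto $C \cap C^\perp$, giving the dimension formula
\[
\dim (C \cap C^\perp) = k - \rank(GG^T).
\]
From here (i) $\Leftrightarrow$ (iii) is immediate: $C$ is LCD iff $C \cap C^\perp = \{\0_n\}$ iff the left-hand side vanishes iff $\rank(GG^T) = k$, i.e.\ iff the square matrix $GG^T$ is nonsingular.

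Finally, for (i) $\Leftrightarrow$ (ii) I would simply invoke $(C^\perp)^\perp = C$. The statement that $C^\perp$ is LCD unwinds to $C^\perp \cap (C^\perp)^\perp = \{\0_n\}$, which is literally $C^\perp \cap C = \{\0_n\}$, the same condition as the statement that $C$ is LCD. Thus (i) and (ii) coincide, and together with the equivalence of (i) and (iii) all three properties are equivalent.

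There is no genuinely hard step here; the result is classical. The one place that needs care is the passage from $a(GG^T) = \0_k$ to a statement about the intersection: one must not confuse the left kernel of $GG^T$, a subspace of $\FF_q^k$, with the intersection $C \cap C^\perp$, a subspace of $\FF_q^n$. The full-rank injectivity of $a \mapsto aG$ is exactly what bridges the two, and it is also what makes the argument insensitive to the characteristic of $\FF_q$; in particular it works uniformly for $q=2$ and $q=3$, the cases of interest in this paper.
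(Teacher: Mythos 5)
Your proof is correct, and it is the standard argument (essentially Massey's original one): identifying $C\cap C^\perp$ with the image of the left kernel of $GG^T$ under the injective map $a\mapsto aG$, which yields $\dim(C\cap C^\perp)=k-\rank(GG^T)$, and then obtaining (i)$\Leftrightarrow$(ii) from $(C^\perp)^\perp=C$. The paper itself offers no proof --- it cites the result to Massey --- so there is nothing to contrast with; your write-up correctly flags the one point needing care, namely not conflating the left kernel of $GG^T$ in $\FF_q^k$ with the intersection $C\cap C^\perp$ in $\FF_q^n$.
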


Throughout this paper, we use the above proposition
without mentioning this,
when we determine whether a given code is LCD or not.

\subsection{Simplex codes and codes $C_{q,k}(m)$}

In this subsection, we give background materials used in
Sections~\ref{sec:2-5} and~\ref{sec:3-4}.
Suppose that $(q,k_0) \in \{(2,3),(3,2)\}$.
We use the following notation:
\[
[k]_q=\frac{q^k-1}{q-1}
\]
for a positive integer $k$.
By induction,
we define the $k \times [k]_q$ $\mathbb{F}_q$-matrices $S_{q,k}$ as follows:
\begin{align*}
&
S_{2,1}=
\begin{pmatrix}
1
\end{pmatrix} \text{ and }
S_{2,k}=
\left(
\begin{array}{ccc}
S_{2,k-1} & \0_{k-1}^T & S_{2,k-1} \\ 
\0_{[k-1]_2} & 1 & \1_{[k-1]_2}
\end{array}
\right)  \text{ if } k \ge 2,\\
&
 S_{3,1}=
\begin{pmatrix}
1
\end{pmatrix} \text{ and }
S_{3,k}=
\left(
\begin{array}{cccc}
S_{3,k-1} & \0_{k-1}^T & S_{3,k-1} & S_{3,k-1} \\ 
\0_{[k-1]_3} & 1 & \1_{[k-1]_3} & 2\1_{[k-1]_3}
\end{array}
\right)   \text{ if } k \ge 2,
\end{align*}
where $\1_n$ denotes the all-one vector of length $n$.
The matrix $S_{q,k}$ is a generator matrix of the simplex $[[k]_q,k,q^{k-1}]$ code.
The simplex $[[k]_q,k,q^{k-1}]$ code is a constant weight code~\cite[Theorem~2.7.5]{HP}.
The simplex $[[k]_q,k,q^{k-1}]$ code is self-orthogonal if $k \ge k_0$~\cite[Theorems~1.4.8~(ii) and 1.4.10~(i)]{HP}.

Throughout this paper, 
$A^{(s)}$ denotes the juxtaposition $(A \cdots A)$
of $s$-copies of $A$ for a matrix $A$.
The following method for construction LCD codes is used 
throughout this paper.

\begin{lem}[{\cite[Lemma~3.5]{AHS2}}]
\label{lem:extend}
Suppose that $(q,k_0) \in \{(2,3),(3,2)\}$, 
$k \ge k_0$, $s$ is a positive integer.
Let $C$ be an LCD $[n,k,d]$ code over $\FF_q$
with generator matrix $G$.
Then the code with generator matrix of form:
\[
\left(\begin{array}{cc}
 S_{q,k}^{(s)} &G
       \end{array}\right)
\]
is an LCD $[n+[k]_q \cdot s,k,d+q^{k-1}s]$ code over $\FF_q$.
\end{lem}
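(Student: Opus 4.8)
(i) the code has the stated length and dimension, (ii) its minimum weight is exactly $d + q^{k-1}s$, and (iii) it is LCD. Write $G' = \left(\begin{smallmatrix} S_{q,k}^{(s)} & G \end{smallmatrix}\right)$ for the proposed generator matrix. The length claim is immediate since $S_{q,k}^{(s)}$ contributes $[k]_q \cdot s$ coordinates and $G$ contributes $n$. For the dimension, I would observe that $G$ already has rank $k$ (it generates an $[n,k]$ code), so $G'$ has $k$ rows and rank exactly $k$; the code it generates has dimension $k$.

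\emph{Minimum weight.} Every nonzero codeword of the new code has the form $x G' = (x S_{q,k}^{(s)} \mid x G)$ for a nonzero $x \in \FF_q^k$. The second block $xG$ is a nonzero codeword of $C$ (since $G$ has rank $k$), hence has weight at least $d$. The first block is the juxtaposition of $s$ copies of $x S_{q,k}$, which is a nonzero codeword of the simplex $[[k]_q,k,q^{k-1}]$ code; since the simplex code has constant weight $q^{k-1}$, each copy contributes exactly $q^{k-1}$, so the first block has weight exactly $q^{k-1}s$. Adding the two blocks gives $\wt(xG') \ge d + q^{k-1}s$. To see that equality is attained, I would take a minimum-weight codeword $c = xG$ of $C$ with $\wt(c)=d$; the corresponding codeword of the new code then has weight exactly $d + q^{k-1}s$. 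This shows the minimum weight equals $d + q^{k-1}s$.

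\emph{The LCD property is the main point, and the step I expect to be the main obstacle.} By Massey's characterization it suffices to show $G'(G')^T$ is nonsingular. Computing blockwise,
\[
G'(G')^T = S_{q,k}^{(s)}\bigl(S_{q,k}^{(s)}\bigr)^T + G G^T
= s\, S_{q,k} S_{q,k}^T + G G^T,
\]
using that juxtaposing $s$ copies multiplies the Gram matrix of a single block by $s$. Here the key fact is that the simplex code is self-orthogonal for $k \ge k_0$, which forces $S_{q,k} S_{q,k}^T = \0$ over $\FF_q$: indeed self-orthogonality means every pair of rows of $S_{q,k}$ is orthogonal, and each row has weight $q^{k-1}$, so even the diagonal entries $\langle r_i, r_i\rangle = q^{k-1} \cdot 1$ vanish modulo the characteristic (one checks $q^{k-1}\equiv 0$ in $\FF_q$ since $q\mid q^{k-1}$ for $k\ge 1$). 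Hence $s\, S_{q,k}S_{q,k}^T = \0$, and therefore $G'(G')^T = G G^T$. Since $C$ is LCD, $GG^T$ is nonsingular, so $G'(G')^T$ is nonsingular and the new code is LCD.

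The only delicate point is confirming $S_{q,k}S_{q,k}^T = \0$ over $\FF_q$ rather than merely off the diagonal; the constant-weight property of the simplex code (cited from \cite[Theorem~2.7.5]{HP}) handles the diagonal, while self-orthogonality (cited for $k \ge k_0$) handles the off-diagonal entries. With these two cited facts in hand the computation is routine, and assembling the three parts completes the proof.
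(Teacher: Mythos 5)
Your proof is correct and is essentially the standard argument: the paper itself states Lemma~\ref{lem:extend} as an import from \cite[Lemma~3.5]{AHS2} without reproducing a proof, and the argument there runs exactly along your lines (constant weight of the simplex code for the minimum-weight count, self-orthogonality of $S_{q,k}$ for $k\ge k_0$ to get $G'(G')^T=GG^T$, then Massey's criterion). Your extra care about the diagonal entries of $S_{q,k}S_{q,k}^T$ is harmless but unnecessary, since self-orthogonality as defined ($C\subset C^\perp$) already forces every codeword, hence every row, to be orthogonal to itself.
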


Let $h_{q,k,i}$ be the $i$-th column of
the $k \times [k]_q$ $\mathbb{F}_q$-matrices $S_{q,k}$.
For a vector $m=(m_1,m_2,\ldots,m_{[k]_q}) \in \mathbb{Z}_{\ge 0}^{[k]_q}$, we define the $k \times \sum_{i=1}^{[k]_q}m_i$ $\mathbb{F}_q$-matrix $G_{q,k}(m)$, which consists of $m_i$ columns $h_{q,k,i}$ for each $i$ as follows:
\begin{equation}\label{eq:Gqkm}
G_{q,k}(m)=
\left(
h_{q,k,1} \cdots h_{q,k,1} \cdots h_{q,k,[k]_q} \cdots h_{q,k,[k]_q}
\right).
\end{equation}
Here, we remark that $m_i=0$ means no column of $G_{q,k}(m)$ is $h_{q,k,i}$.
We denote by $C_{q,k}(m)$ the code with generator matrix $G_{q,k}(m)$.

  

%


\begin{lem}[{\cite[Lemma~3.6]{AHS2}}]
\label{lem:mi}
Suppose that $(q,k_0) \in \{(2,3),(3,2)\}$, 
$k \ge k_0$ and 
$m=(m_1,m_2,\ldots,m_{[k]_q}) \in \mathbb{Z}_{\ge 0}^{[k]_q}$.
Let $C_{q,k}(m)$ be the $[n,k]$ code over $\FF_q$
with generator matrix $G_{q,k}(m)$ of form~\eqref{eq:Gqkm},
where $n=\sum_{i=1}^{[k]_q}m_i$.
If $C_{q,k}(m)$ is an LCD code with minimum weight at least $d$,
then
\begin{equation}\label{eq:mi}
 qd-(q-1)n \le m_i \le n-\frac{q^{k-1}-1}{(q-1)q^{k-2}}d
 \end{equation}
for each $i \in \{1,2,\ldots,[k]_q\}$.
 \end{lem}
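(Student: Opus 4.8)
The plan is to argue projectively. The $[k]_q$ columns $h_{q,k,1},\dots,h_{q,k,[k]_q}$ of $S_{q,k}$ are representatives of the $[k]_q$ distinct one-dimensional subspaces of $\FF_q^k$, that is, the points of $PG(k-1,q)$, while each nonzero $x\in\FF_q^k$ (up to scalars) indexes a hyperplane $H_x=\{\,i:\langle x,h_{q,k,i}\rangle=0\,\}$. First I would record the weight formula: a copy of $h_{q,k,i}$ contributes a zero entry to the codeword $xG_{q,k}(m)$ exactly when $i\in H_x$, so $\wt(xG_{q,k}(m))=n-\sum_{i\in H_x}m_i$. As $x$ runs over all nonzero vectors up to scalars, $H_x$ runs over all $[k]_q$ hyperplanes, so the hypothesis that the minimum weight is at least $d$ is equivalent to
\[
\sum_{i\in H}m_i\le n-d,\qquad\text{equivalently}\qquad \sum_{i\notin H}m_i\ge d,
\]
for every hyperplane $H$ of $PG(k-1,q)$.

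The two bounds then follow by double counting incidences against a fixed point $i$, using the standard counts in $PG(k-1,q)$: a point lies on $[k-1]_q$ hyperplanes, two distinct points lie on $[k-2]_q$ common hyperplanes, $q^{k-1}$ hyperplanes avoid a given point, and $q^{k-2}(q-1)$ hyperplanes avoid two given distinct points. For the upper bound I would sum the inequality $\sum_{j\in H}m_j\le n-d$ over the $[k-1]_q$ hyperplanes $H$ through $i$; tallying the contribution of each $m_j$ (the term $m_i$ appears in all of them, each $m_j$ with $j\ne i$ in $[k-2]_q$ of them) gives
\[
m_i[k-1]_q+(n-m_i)[k-2]_q\le [k-1]_q(n-d).
\]
For the lower bound I would sum $\sum_{j\notin H}m_j\ge d$ over the $q^{k-1}$ hyperplanes $H$ missing $i$, obtaining
\[
m_iq^{k-1}+(n-m_i)q^{k-2}(q-1)\ge q^{k-1}d.
\]

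Finally I would simplify. Using $[k-1]_q-[k-2]_q=q^{k-2}$ the first display collapses to $m_iq^{k-2}\le nq^{k-2}-[k-1]_qd$, which is the asserted upper bound $m_i\le n-\frac{q^{k-1}-1}{(q-1)q^{k-2}}d$; using $q^{k-1}-q^{k-2}(q-1)=q^{k-2}$ the second collapses to $m_i+(q-1)n\ge qd$, the lower bound. The only real obstacle is bookkeeping: pinning down the four incidence counts correctly (this is the crux) and checking that the minimum-weight hypothesis really yields the inequality for \emph{every} hyperplane, which needs $G_{q,k}(m)$ to have rank $k$ so that distinct $x$ give distinct nonzero codewords; this holds because $C_{q,k}(m)$ is assumed to be an $[n,k]$ code. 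I note that the argument in fact uses only the minimum weight and the projective structure, not the LCD property itself.
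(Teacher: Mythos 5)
The paper does not actually prove this lemma; it is imported verbatim from \cite[Lemma~3.6]{AHS2}, so there is no in-paper argument to compare against. Your proof is correct and complete as a standalone argument. The weight formula $\wt(xG_{q,k}(m))=n-\sum_{i\in H_x}m_i$ is right, the rank-$k$ hypothesis (implicit in ``$[n,k]$ code'', and also forced by $GG^T$ being nonsingular) does guarantee that the minimum-weight condition is equivalent to $\sum_{i\in H}m_i\le n-d$ for \emph{every} hyperplane $H$ of $PG(k-1,q)$, and the four incidence counts you use ($[k-1]_q$ hyperplanes through a point, $[k-2]_q$ through two distinct points, $q^{k-1}$ avoiding a point, $q^{k-2}(q-1)$ avoiding two distinct points) are all standard and correct; I checked that both double counts collapse, via $[k-1]_q-[k-2]_q=q^{k-2}$ and $q^{k-1}-q^{k-2}(q-1)=q^{k-2}$, exactly to the two stated bounds, including the boundary case $k=2$ where $[k-2]_q=0$. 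Your closing observation is also accurate and worth keeping: the LCD hypothesis plays no role, only the minimum weight and the fact that $G_{q,k}(m)$ has rank $k$, so the conclusion holds for any $[n,k]$ code of the form $C_{q,k}(m)$ with minimum weight at least $d$.
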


\begin{rem}\label{rem}
By considering all vectors $m=(m_1,m_2,\ldots,m_{[k]_q}) \in
\mathbb{Z}_{\ge 0}^{[k]_q}$ 
such that $n=\sum_{i=1}^{[k]_q}m_i$ and~\eqref{eq:mi},
it is possible to find
representatives of all equivalence classes of LCD
 $[n,k]$ codes over $\FF_q$ with
minimum weights at least $d$ and  dual distances $d^\perp \ge 2$
as $C_{q,k}(m)$ for a given set of parameters $q,n,k,d$.
\end{rem}





In Sections~\ref{sec:2-5} and~\ref{sec:3-4},
we study the largest minimum weights $d_2(n,5)$ and $d_3(n,4)$
by using the following proposition.

\begin{prop}[{\cite[Theorem~4.7]{AHS2}}]
\label{prop:main2}
Suppose that $(q,k_0) \in \{(2,3),(3,2)\}$ and $k \ge k_0$.
Assume that we write
\[
n=[k]_q \cdot s+t,
\]
where $s \in \ZZ_{\ge 0}$
and $t \in \{0,1,\ldots,[k]_q-1\}$.
In addition, assume the following:
\begin{equation}\label{eq:as}
\begin{split}
&\text{the minimum weight $d$ is written as}\\
&d(s,t)=q^{k-1}s+\alpha(t), \\
&\text{where $\alpha(t)$ is a constant depending on only $t$.}
\end{split}
\end{equation}
Let $r$ and $s'$ denote the integers
$r_{q,([k]_q \cdot s+t) ,k,d(s,t)}$ and $s'_{q,([k]_q \cdot s+t),k,d(s,t)}$,
where
\begin{align}\label{eq:r}
r_{q,([k]_q \cdot s+t),k,d(s,t)}&=q^{k-1}([k]_q \cdot s+t)-[k]_q \cdot d(s,t),\\
\label{eq:s0}
s'_{q,([k]_q \cdot s+t),k,d(s,t)}&= \frac{qr_{q,([k]_q \cdot s+t),k,d(s,t)}-t}{[k]_q}+1.
\end{align}
If $q r \ge k$ and 
there is no LCD code over $\FF_q$
with dual distances $d^\perp \ge 2$ and parameters
\[
[qr,k,(q-1)r]
=
[[k]_q \cdot (s'-1)+t,k,q^{k-1}(s'-1)+\alpha(t)],
\]
then
     there is no  LCD code over $\FF_q$ with parameters
\[
[[k]_q \cdot s+t,k,q^{k-1}s+\alpha(t)]
\]
for every integer $s$.
\end{prop}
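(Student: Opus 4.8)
The plan is to prove the contrapositive in the stronger, more convenient form: if for some integer $s\ge 0$ there is an LCD $[[k]_q\cdot s+t,k]$ code over $\FF_q$ whose minimum weight is \emph{at least} $q^{k-1}s+\alpha(t)$, then there is an LCD code with $d^\perp\ge 2$ and parameters $[qr,k,(q-1)r]$, contradicting the hypothesis. Before anything else I would record three identities obtained by substituting $n=[k]_q\cdot s+t$ and $d=q^{k-1}s+\alpha(t)$ into \eqref{eq:r}--\eqref{eq:s0} and using $(q-1)[k]_q=q^{k}-1$: first, $r=q^{k-1}t-[k]_q\alpha(t)$ depends only on $t$; second, $s'-1=(q-1)t-q\alpha(t)$, so that the base length $[k]_q(s'-1)+t$ equals $qr$ and the base weight $q^{k-1}(s'-1)+\alpha(t)$ equals $(q-1)r$ (hence the two parameter triples in the statement genuinely coincide); and third, the lower bound $qd-(q-1)n$ appearing in \eqref{eq:mi} is exactly $s-s'+1$. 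I would also observe that $qr\ge k$ forces $s'\ge 1$, so that the base level $s'-1$ is a nonnegative, hence admissible, value of $s$.

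The first substantive step is to arrange $d^\perp\ge 2$ without changing the length or lowering the weight. If the generator matrix $G$ has a zero column, I would replace it by a column $h\in\FF_q^{k}$ chosen so that $GG^{T}+hh^{T}$ is still nonsingular; by the matrix determinant lemma this fails only when $h^{T}M^{-1}h=-1$ for $M=GG^{T}$, an affine quadric which, once $k\ge k_0$, does not contain all of the $q^{k}-1$ nonzero vectors, so a valid nonzero $h$ always exists. Replacing a zero column by a nonzero one cannot decrease any codeword weight, so the minimum weight stays at least $d$ while the number of zero columns strictly drops; iterating gives an LCD $[[k]_q\cdot s+t,k]$ code with $d^\perp\ge 2$ and minimum weight at least $d$. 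By Remark~\ref{rem} this code is $C_{q,k}(m)$ for some $m=(m_1,\dots,m_{[k]_q})$ with $\sum_i m_i=[k]_q\cdot s+t$, and by Lemma~\ref{lem:mi} together with the third identity above, $m_i\ge s-s'+1$ for every $i$.

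The engine of the argument is that for $k\ge k_0$ the simplex generator satisfies $S_{q,k}S_{q,k}^{T}=\sum_i h_{q,k,i}h_{q,k,i}^{T}=0$ by self-orthogonality. Consequently, whenever all entries of $m$ are at least $1$, the Gram matrix of $C_{q,k}(m)$ and that of the code obtained by decreasing every $m_i$ by $1$ coincide over $\FF_q$, so LCD-ness is preserved; meanwhile each nonzero codeword loses weight exactly $q^{k-1}$, since every nonzero $x$ satisfies $\langle x,h_{q,k,i}\rangle\neq 0$ for precisely $q^{k-1}$ indices $i$. If $s\ge s'-1$, the bound $m_i\ge s-s'+1$ permits this decrease to be applied $s-s'+1$ times, landing on an LCD code with $d^\perp\ge 2$, length $qr$, and minimum weight at least $(q-1)r$. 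If instead $s\le s'-2$, I would go the other way and apply Lemma~\ref{lem:extend} with $s'-1-s$ copies of $S_{q,k}$, again producing an LCD $[qr,k]$ code with $d^\perp\ge 2$ and minimum weight at least $(q-1)r$. In either case the existence of a base code contradicts the hypothesis.

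The step I expect to be the main obstacle is the reduction to $d^\perp\ge 2$: the entire descent lives inside the $C_{q,k}(m)$ picture of Remark~\ref{rem}, which presupposes no zero columns, whereas the conclusion must cover arbitrary LCD codes. The delicate points will be proving that a weight-non-decreasing, LCD-preserving replacement column always exists (the counting over $\FF_q$, where the hypothesis $k\ge k_0$ is essential) and verifying that deleting a full copy of $S_{q,k}$ lowers the minimum weight by \emph{exactly} $q^{k-1}$ rather than merely by at least that much. Once these are in place, the two-sided descent onto the single base length $qr$, and the verification that $qr\ge k$ keeps $s'-1$ admissible, are routine.
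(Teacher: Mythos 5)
The paper itself offers no proof of this proposition: it is imported verbatim from \cite[Theorem~4.7]{AHS2}, so there is no in-paper argument to measure you against. That said, your reconstruction runs on exactly the machinery the paper assembles for this purpose (Lemma~\ref{lem:extend}, Lemma~\ref{lem:mi}, the codes $C_{q,k}(m)$, and the self-orthogonality of the simplex code), and the engine you identify --- invariance of the Gram matrix under $m\mapsto m-\1$ because $S_{q,k}S_{q,k}^T=0$, the exact drop of $q^{k-1}$ in every nonzero codeword weight, the identities $r=q^{k-1}t-[k]_q\alpha(t)$ and $qd-(q-1)n=s-s'+1$, and the two-sided descent/ascent onto the level $s'-1$ --- is the argument underlying the cited result; your arithmetic checks are all correct, and $qr\ge k$ does force $s'\ge 1$. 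One small presentational point: the dual distance of the landing code is $\ge 2$ not because the final multiplicities are all $\ge 1$ (after $s-s'+1$ subtractions you only know $m_i'\ge 0$) but because every column of $G_{q,k}(m')$ is a nonzero vector $h_{q,k,i}$; you assert the conclusion without saying which reason you are invoking, and only the second is available.

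The one genuine gap is in your reduction to $d^\perp\ge 2$. Replacing a zero column of $G$ by a nonzero $h$ with $h^TM^{-1}h\ne-1$ does keep the code LCD (your counting over $\FF_2$ and $\FF_3$ is right), but it can strictly increase the minimum weight. Since the paper's convention is that an $[n,k,d]$ code has minimum weight exactly $d$, your descent then lands on an LCD $[qr,k]$ code with $d^\perp\ge2$ whose minimum weight may be strictly larger than $(q-1)r$, and that does not contradict the hypothesis as literally stated. The mismatch is harmless when $r<[k]_q$: the average weight of the nonzero codewords of any $C_{q,k}(m')$ with $\sum_i m_i'=qr$ equals $(q-1)r+r/[k]_q<(q-1)r+1$, so minimum weight $\ge(q-1)r$ forces equality. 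But several entries of Tables~\ref{Tab:2-5-0a} and~\ref{Tab:3-4-0} have $r\ge[k]_q$, and there you need either an extra argument (for instance, deleting rather than replacing the zero columns and tracking how $r$ changes, or ruling out the larger minimum weight directly), or to prove and use the statement with ``minimum weight at least $(q-1)r$'' in the hypothesis --- which is in effect what the computer searches of Sections~\ref{sec:2-5} and~\ref{sec:3-4} verify. As it stands, the final sentence ``the existence of a base code contradicts the hypothesis'' is not justified for those $r$.
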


We remark that the assumption~\eqref{eq:as} is automatically satisfied
for our study in Sections~\ref{sec:2-5} and~\ref{sec:3-4}.

%

\section{LCD codes meeting the Griesmer bound}
\label{sec:Gb}

Ward~\cite{W98} studied the divisibility of codes meeting the Griesmer
bound.
The following lemma is a consequence of~\cite[Theorem~1]{W98}, however,
it is a powerful tool for our study in
Sections~\ref{sec:2-5} and~\ref{sec:3-4}.

\begin{lem}\label{lem:Gb}
\begin{enumerate}
\item 
If $k$ is odd and $d$ is even, then there is no binary LCD $[n,k,d]$ code
meeting the Griesmer bound for every positive integer $n$.
\item
If $d \equiv 0 \pmod 3$, then there is no ternary LCD $[n,k,d]$ code
meeting the Griesmer bound for every positive integers $n$ and $k$.
\end{enumerate}
\end{lem}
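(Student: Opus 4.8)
The plan is to invoke Ward's divisibility theorem for codes meeting the Griesmer bound and to combine it with the LCD property, which forbids certain self-orthogonal behavior. Ward's result (\cite[Theorem~1]{W98}) states that if a $p$-ary code (with $p$ prime) meets the Griesmer bound, then the code is divisible by a suitable power of $p$; more precisely, its weights share a common divisor determined by $p$ and the parameters. In the binary and ternary cases at hand, the relevant consequence is that a code meeting the Griesmer bound has all its weights divisible by $p$ in the regime we care about. First I would extract from Ward's theorem the precise divisibility conclusion for each of the two cases separately, since the arithmetic hypotheses ($k$ odd and $d$ even for $p=2$; $d\equiv 0\pmod 3$ for $p=3$) are exactly the conditions under which Ward's bound forces every codeword weight to be divisible by $p$.

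For part (ii), the ternary case, I expect the argument to be the cleanest. Assume for contradiction that $C$ is a ternary LCD $[n,k,d]$ code meeting the Griesmer bound with $d\equiv 0\pmod 3$. Ward's theorem then yields that $C$ is divisible, i.e.\ every codeword of $C$ has weight $\equiv 0\pmod 3$. Over $\FF_3$, weight divisibility by $3$ is tightly linked to self-orthogonality: for a vector $x\in\FF_3^n$ one has $\langle x,x\rangle = \sum x_i^2 = \wt(x)\pmod 3$ because every nonzero square in $\FF_3$ equals $1$. Hence $\wt(x)\equiv 0\pmod 3$ forces $\langle x,x\rangle=0$, and extending this to all pairs via the polarization identity $\langle x,y\rangle = \tfrac{1}{2}(\langle x+y,x+y\rangle-\langle x,x\rangle-\langle y,y\rangle)$ (noting $2$ is invertible mod $3$) shows $C\subseteq C^\perp$, i.e.\ $C$ is self-orthogonal. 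But a nonzero self-orthogonal code satisfies $C\cap C^\perp=C\neq\{\0_n\}$, contradicting the LCD hypothesis. This disposes of part (ii).

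For part (i), the binary case, the skeleton is parallel but the bookkeeping is more delicate, and this is where I expect the main obstacle to lie. With $k$ odd and $d$ even, Ward's theorem should again force divisibility, giving every weight even, i.e.\ $C$ is an even code and hence self-orthogonal over $\FF_2$ (since over $\FF_2$, $\langle x,x\rangle=\wt(x)\pmod 2$ and evenness of all weights plus the bilinearity of the form over $\FF_2$ gives $C\subseteq C^\perp$). Again a nonzero even binary code cannot be LCD. The subtlety is that Ward's divisibility conclusion is not automatic from $d$ even alone: one must verify that the hypotheses $k$ odd, $d$ even place the parameters in the range where Ward's theorem guarantees full $2$-divisibility rather than merely a weaker congruence. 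I would therefore check Ward's divisor formula carefully against these hypotheses, confirming that the Griesmer-meeting binary code with $k$ odd and $d$ even is forced to have all weights even; once that is secured, the self-orthogonality contradiction with the LCD property closes the case exactly as in the ternary setting.

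The hard part will be citing Ward's theorem with the correct divisor exponent and confirming that the stated parity/divisibility hypotheses are precisely what trigger the even (resp.\ $3$-divisible) conclusion; the passage from divisibility to self-orthogonality and then to the LCD contradiction is routine once the divisibility is in hand. I would present the ternary case first as the model argument, then adapt it to the binary case, flagging in the binary case the one extra verification about Ward's divisor.
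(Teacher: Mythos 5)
Your ternary argument is essentially the paper's: Ward's theorem gives $3$-divisibility of all weights, and over $\FF_3$ this is equivalent to self-orthogonality (the paper cites the proof of \cite[Theorem~2.1]{KP} for this equivalence; your polarization argument is a correct way to supply it, since $2$ is invertible mod $3$), which contradicts LCD. The binary half, however, contains a genuine error. Over $\FF_2$ an even code need \emph{not} be self-orthogonal: polarization fails in characteristic $2$, and for example the $[4,2,2]$ code generated by $(1,1,0,0)$ and $(0,1,1,0)$ is even, is not self-orthogonal, and is in fact LCD (its Gram matrix $GG^T=\left(\begin{smallmatrix}0&1\\1&0\end{smallmatrix}\right)$ is nonsingular). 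So your concluding step ``a nonzero even binary code cannot be LCD'' is false, and your plan to use the hypothesis that $k$ is odd only as input to Ward's divisibility misplaces where that assumption actually does its work: Ward already yields evenness of $C$ from $d$ even alone.

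The correct use of ``$k$ odd'' is the paper's: a binary \emph{even} LCD code must have even dimension (\cite[Theorem~5]{CMTQ}; the underlying reason is that for an even code $GG^T$ is alternating, since $\langle x,x\rangle\equiv\wt(x)\pmod 2$ forces a zero diagonal, and a nonsingular alternating matrix has even order). Since $C$ is even by Ward's theorem and LCD by hypothesis, its dimension would have to be even, contradicting $k$ odd. You need this dimension-parity step to close part (i); the self-orthogonality route simply does not exist in characteristic $2$.
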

\begin{proof}
\begin{enumerate}
\item 
Suppose that there is a binary LCD $[n,k,d]$ code $C$
meeting the Griesmer bound.
If $d$ is even, then  
$C$ is an even code~\cite[Theorem~1]{W98}.
Any binary even LCD code must have even dimension~\cite[Theorem~5]{CMTQ}.

\item 
Suppose that there is a ternary LCD $[n,k,d]$ code $C$
meeting the Griesmer bound and $d \equiv 0 \pmod 3$.
By~\cite[Theorem~1]{W98},
the weight of any codeword in $C$ is a multiple of $3$.
It is known that the weights of all codewords in $C$ are 
multiples of $3$
if and only if $C$ is self-orthogonal
(see e.g.~the proof of~\cite[Theorem~2.1]{KP}).
This is a contradiction.
\end{enumerate}
This completes the proof. 
\end{proof}

\begin{rem}
A similar argument shows the nonexistence of a
binary LCD $[n,k,d]$ code
meeting the Griesmer bound for $d \equiv 0 \pmod 4$
(see the proof of~\cite[Proposition~3.9]{DKOSS}).
\end{rem}  


It was shown that $d_2(k+1,k)=2$ and $1$ if
$k$ is even and odd, respectively~\cite[Proposition~3.2]{DKOSS}.
This result is generalized, as an example of the above lemma.

\begin{prop}\label{prop:sk-1}
For a positive integer $k \ge 3$ and a positive integer $s$,
\[
d_2((2^k-1)s+k+1,k)=
\begin{cases}
2^{k-1}s+2 \text{ if $k$ is even,}\\ 
2^{k-1}s+1 \text{ if $k$ is odd.} 
\end{cases}
\]
\end{prop}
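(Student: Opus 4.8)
The plan is to bracket $d_2((2^k-1)s+k+1,k)$ between a lower bound from the extension construction and an upper bound from the Griesmer bound, which in the odd case I sharpen using Lemma~\ref{lem:Gb}. Throughout I write $n=(2^k-1)s+k+1=[k]_2\cdot s+(k+1)$.

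First I would compute the Griesmer bound $g_2(n,k)$. Writing a candidate minimum weight as $d=2^{k-1}s+c$ and using that $2^{k-1-i}s$ is an integer for $0\le i\le k-1$, the Griesmer sum splits cleanly as $\sum_{i=0}^{k-1}\lceil d/2^i\rceil=(2^k-1)s+\sum_{i=0}^{k-1}\lceil c/2^i\rceil$. A short bookkeeping gives $\sum_{i=0}^{k-1}\lceil 2/2^i\rceil=2+(k-1)=k+1$ and $\sum_{i=0}^{k-1}\lceil 3/2^i\rceil=3+2+(k-2)=k+3$, so $c=2$ meets the bound with equality while $c=3$ already exceeds $n$. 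Hence $g_2(n,k)=2^{k-1}s+2$, and every binary $[n,k,d]$ code, LCD or not, satisfies $d\le 2^{k-1}s+2$.

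For the lower bound I would start from the LCD $[k+1,k,d_0]$ code guaranteed by~\cite[Proposition~3.2]{DKOSS}, where $d_0=2$ if $k$ is even and $d_0=1$ if $k$ is odd, and apply Lemma~\ref{lem:extend} with the simplex generator $S_{2,k}$; this is legitimate since $k\ge k_0=3$. The result is an LCD $[k+1+(2^k-1)s,\,k,\,d_0+2^{k-1}s]$ code, which yields $d_2(n,k)\ge 2^{k-1}s+2$ when $k$ is even and $d_2(n,k)\ge 2^{k-1}s+1$ when $k$ is odd.

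When $k$ is even the lower and upper bounds already coincide at $2^{k-1}s+2$, finishing that case. When $k$ is odd the only remaining value to exclude is $d=2^{k-1}s+2$; since $k\ge 3$ this $d$ is even, and it equals $g_2(n,k)$, so any LCD code attaining it would meet the Griesmer bound with $k$ odd and $d$ even, contradicting Lemma~\ref{lem:Gb}(i). Thus $d_2(n,k)\le 2^{k-1}s+1$ in the odd case, matching the lower bound. The only genuinely delicate point is the Griesmer computation—confirming that $c=2$ gives equality, so that the extremal value is exactly the even number $2^{k-1}s+2$—since this is precisely what makes Lemma~\ref{lem:Gb}(i) applicable in the odd case; the remainder is simply assembling the two bounds.
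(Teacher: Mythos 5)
Your proof is correct and follows essentially the same route as the paper: lower bound from $d_2(k+1,k)$ via Lemma~\ref{lem:extend}, upper bound from the Griesmer bound, and exclusion of $2^{k-1}s+2$ in the odd case via Lemma~\ref{lem:Gb}(i) because any code attaining that value would meet the Griesmer bound with $k$ odd and $d$ even. Your explicit verification that $g_2(n,k)=2^{k-1}s+2$ and that equality there coincides with equality in the Griesmer bound is exactly the point the paper notes more tersely.
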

\begin{proof}
Suppose that $k \ge 3$.
Then, by Lemma~\ref{lem:extend}, there is a binary LCD
$[(2^k-1)s+k+1,k,2^{k-1}s+d_2(k+1,k)]$ code for every positive integer $s$.
For every positive integer $s$,
the Griesmer bound is the same as the following bound:
\[
d_2((2^k-1)s+k+1,k) \le  2^{k-1}s+2.
\]
Note that the equality holds in the above bound if and only if
the equality holds in the Griesmer bound.
When $k$ is odd,
by Lemma~\ref{lem:Gb}~(i),
there is no binary LCD
$[(2^k-1)s+k+1,k,2^{k-1}s+2]$ code for every positive integer $s$.
The result follows.
\end{proof}

In addition, by Lemma~\ref{lem:Gb}~(i), we determine the
largest minimum weights
$d_2(127s+15,7)$,
$d_2(127s+23,7)$ and $d_2(511s+17,9)$
for a nonnegative integer $s$.

\begin{prop}\label{prop:F2}
For a nonnegative integer $s$,
\[
\begin{array}{ll}
d_2(127s+15,7)=64s+5, &
d_2(127s+23,7)=64s+9,  \\
d_2(511s+17,9)=256s+5.
\end{array}
\]
\end{prop}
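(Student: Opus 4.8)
The plan is to prove matching upper and lower bounds for each of the three parameter families, mirroring the argument used for Proposition~\ref{prop:sk-1}. Each family has odd dimension ($k=7$ or $k=9$), so Lemma~\ref{lem:Gb}~(i) will supply the upper bound once I show that the relevant Griesmer value is even, while Lemma~\ref{lem:extend} reduces the lower bound to a single base case. For the upper bound, the first step is a clean periodicity of the Griesmer function: writing $d=2^{k-1}s+c$ and using that $2^{k-1}/2^i=2^{k-1-i}$ is an integer for every $i\le k-1$, the ceiling splits as $\lceil d/2^i\rceil = 2^{k-1-i}s+\lceil c/2^i\rceil$, so summing over $i=0,\ldots,k-1$ gives $\sum_{i=0}^{k-1}\lceil d/2^i\rceil = (2^k-1)s+\sum_{i=0}^{k-1}\lceil c/2^i\rceil$. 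Hence $g_2((2^k-1)s+t,k)=2^{k-1}s+g_2(t,k)$ for $0\le t< 2^k-1$. A direct evaluation then yields $g_2(15,7)=6$, $g_2(23,7)=10$ and $g_2(17,9)=6$, so the Griesmer bounds for the three families are $64s+6$, $64s+10$ and $256s+6$. Each is even while the dimension ($7$, $7$, $9$) is odd, so Lemma~\ref{lem:Gb}~(i) forbids an LCD code meeting the Griesmer bound, giving $d_2(127s+15,7)\le 64s+5$, $d_2(127s+23,7)\le 64s+9$ and $d_2(511s+17,9)\le 256s+5$.

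For the matching lower bounds, I would first exhibit, for the base case $s=0$, explicit binary LCD codes with parameters $[15,7,5]$, $[23,7,9]$ and $[17,9,5]$; their existence and the LCD property (nonsingularity of $GG^T$) are confirmed by a finite computer check on a concrete generator matrix. Applying Lemma~\ref{lem:extend} to each base code $s$ times then produces LCD codes with parameters $[127s+15,7,64s+5]$, $[127s+23,7,64s+9]$ and $[511s+17,9,256s+5]$ for every positive integer $s$. Together with the base cases, this covers all nonnegative $s$ and meets the upper bounds, establishing equality throughout.

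The periodicity reduction and the Griesmer evaluations are routine, and the divisibility obstruction of Lemma~\ref{lem:Gb}~(i) does the work on the upper side; the only genuine content lies on the lower side, in producing the three base LCD codes. The main obstacle is therefore confirming that LCD codes attaining the odd value one below the Griesmer bound actually exist at the base lengths, i.e.\ that the parity obstruction is the \emph{only} obstruction there. Once a single valid generator matrix is found at each of the three base lengths, the extension lemma propagates it uniformly in $s$ and the argument closes.
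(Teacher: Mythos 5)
Your proposal is correct and follows essentially the same route as the paper: Griesmer bound plus Lemma~\ref{lem:Gb}~(i) (odd dimension, even $d$) for the upper bounds, and Lemma~\ref{lem:extend} applied to base LCD codes of parameters $[15,7,5]$, $[23,7,9]$, $[17,9,5]$ for the lower bounds. The only cosmetic difference is that the paper cites the known values $d_2(15,7)=5$, $d_2(23,7)=9$, $d_2(17,9)=5$ from existing tables rather than re-verifying explicit generator matrices by computer.
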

\begin{proof}
Let $s$ be a positive integer.
By the Griesmer bound, we have
\[
\begin{array}{ll}
d_2(127s+15,7)\le 64s+6, &
d_2(127s+23,7)\le 64s+10, \\ 
d_2(511s+17,9)\le 256s+6.
\end{array}
\]
For 
\begin{equation*}\label{eq:F2}
(n,k,d) \in
\left\{
\begin{array}{l}
(127s+15,7,64s+6),   
(127s+23,7,64s+10), \\ 
(511s+17,9,256s+6)
\end{array}
\right\},
\end{equation*}
 each binary $[n,k,d]$ code meets the Griesmer bound.
Since $k$ is odd and $d$ is even,
by Lemma~\ref{lem:Gb}~(i), it is not LCD.
Hence, we have
\[
\begin{array}{ll}
d_2(127s+15,7)\le 64s+5, &
d_2(127s+23,7)\le 64s+9,   \\
d_2(511s+17,9)\le 256s+5.
\end{array}
\]
It is known that 
$d_2( 15, 7)= 5$, 
$d_2( 17, 9)= 5$ and $d_2(23, 7)= 9$~\cite[Table~15]{AH}
 and~\cite[Table~3]{HS}.
By Lemma~\ref{lem:extend},
there is a binary LCD
$[n,k,d]$ code for
\[
(n,k,d) \in
\left\{
\begin{array}{l}
(127s+15,7,64s+5),   
(127s+23,7,64s+9), \\ 
(511s+17,9,256s+5)
\end{array}
\right\}.
\]
This completes the proof. 
\end{proof}

\begin{rem}
Note that $d_2(n,k)$ is known for $n \le 24$
(see~\cite[Table~15]{AH}, \cite[Table~1]{GKLRW} and~\cite[Table~3]{HS}).
Only the parameters
$[15, 7, 6]$, $[17, 9, 6]$ and $[23, 7,10]$ are
parameters $[n,k,d_2(n,k)+1]$ meeting the Griesmer bound and
satisfying the
assumption of Lemma~\ref{lem:Gb}~(i) for $n \le 24$ and $6 \le k \le n-6$.
\end{rem}

\section{Binary LCD codes of dimension 5}
\label{sec:2-5}

In this section, we study the largest minimum weights $d_2(n,5)$.
For $n \ge 5$,
write $n=31s+t$, where $s \in \ZZ_{\ge 0}$ and $t \in \{0,1,\ldots,30\}$.
We list $g_2(31s+t,5)$ in Table~\ref{Tab:G-bound-2-5}.

\begin{table}[thb]
\caption{$g_2(31s+t,5)$}
\label{Tab:G-bound-2-5}
\begin{center}
{\small
\begin{tabular}{c|c||c|c||c|c}
\noalign{\hrule height0.8pt}
$n$ & $g_2(n,5)$ & $n$ & $g_2(n,5)$ & $n$ & $g_2(n,5)$  \\\hline
$31s$   & $16s$    &$31s+11$ & $16s+4$ &$31s+22$ & $16s+10$ \\
$31s+1$ & $16s$    &$31s+12$ & $16s+5$ &$31s+23$ & $16s+11$ \\
$31s+2$ & $16s$    &$31s+13$ & $16s+6$ &$31s+24$ & $16s+12$ \\
$31s+3$ & $16s$    &$31s+14$ & $16s+6$ &$31s+25$ & $16s+12$ \\
$31s+4$ & $16s$    &$31s+15$ & $16s+7$ &$31s+26$ & $16s+12$ \\
$31s+5$ & $16s+1$  &$31s+16$ & $16s+8$ &$31s+27$ & $16s+13$ \\
$31s+6$ & $16s+2$  &$31s+17$ & $16s+8$ &$31s+28$ & $16s+14$ \\
$31s+7$ & $16s+2$  &$31s+18$ & $16s+8$ &$31s+29$ & $16s+14$ \\
$31s+8$  & $16s+3$ &$31s+19$ & $16s+8$ &$31s+30$ & $16s+15$ \\
$31s+9$  & $16s+4$ &$31s+20$ & $16s+9$ & & \\
$31s+10$ & $16s+4$ &$31s+21$ & $16s+10$& & \\
\noalign{\hrule height0.8pt}
\end{tabular}
}
\end{center}
\end{table}

\subsection{Known results on $d_2(n,5)$ and corrections of~\cite{AH}}

It was shown in~\cite[(5)]{AH} that
if $n \equiv 3,5,7,11,19,20,22,26 \pmod{31}$ and $n \ge 5$, then
\[
d_2(n,5)=\left\lfloor \frac{16n}{31}\right\rfloor-1, 
\]
and if $n \equiv 4 \pmod{31}$ and $n \ge 5$, then
\[
d_2(n,5)=\left\lfloor \frac{16n}{31}\right\rfloor-2.
\]
In the course of preparing this paper, we discovered 
some errors of~\cite{AH}.
In~\cite[(5)]{AH}, for $n \equiv 12 \pmod{31}$ the upper bound
was incorrectly stated to be
$d \le \left\lfloor \frac{16n}{31}\right\rfloor-2$
and the correct bound is
$d \le \left\lfloor \frac{16n}{31}\right\rfloor-1$.
This led to the error in~\cite[Proposition~3]{AH} for
$n \equiv 12 \pmod{31}$.
The correct version of~\cite[Proposition~3]{AH} is as follows:

\begin{prop}\label{prop:correct}
If $n \equiv 1,9,13,15,17,21,23,24,25,27,28,29,30 \pmod{31}$
and $n \ge 5$, then
\[
d_2(n,5)=\left\lfloor \frac{16n}{31}\right\rfloor \text{ or }
\left\lfloor \frac{16n}{31}\right\rfloor-1.
\]
If $n \equiv 2,6,8,10,12,14,18 \pmod{31}$
and $n \ge 5$, then
\[
d_2(n,5)=\left\lfloor \frac{16n}{31}\right\rfloor-1 \text{ or }
\left\lfloor \frac{16n}{31}\right\rfloor-2.
\]
 If $n \equiv 0,16 \pmod{31}$ and $n \ge 5$, then
\[
d_2(n,5)=
\left\lfloor \frac{16n}{31}\right\rfloor,
\left\lfloor \frac{16n}{31}\right\rfloor -1
\text{ or }
\left\lfloor \frac{16n}{31}\right\rfloor-2.
\]
\end{prop}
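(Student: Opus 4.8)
The plan is to bracket $d_2(n,5)$ between the Griesmer bound from above and the lengthening construction of Lemma~\ref{lem:extend} from below, and to check that in each of the three residue families these two bounds leave only the stated two or three values.

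For the upper bound I would use $d_2(n,5)\le g_2(n,5)$ together with Table~\ref{Tab:G-bound-2-5}. Writing $n=31s+t$ and comparing with $\lfloor 16n/31\rfloor=16s+\lfloor 16t/31\rfloor$, one checks directly that $g_2(n,5)=\lfloor 16n/31\rfloor$ for the residues in the first and third lists and $g_2(n,5)=\lfloor 16n/31\rfloor-1$ for those in the second list. This already supplies the top value of each displayed range, and it is exactly this Griesmer computation for $t\equiv 12$ (where $g_2=16s+5=\lfloor 16n/31\rfloor-1$) that corrects the erroneous bound $\lfloor 16n/31\rfloor-2$ of \cite{AH}. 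I would deliberately avoid invoking Lemma~\ref{lem:Gb} at this stage, since the sharper divisibility constraint is what later lets Theorem~\ref{thm:2-5} select a single value inside these ranges.

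For the lower bound I would apply Lemma~\ref{lem:extend} with $(q,k)=(2,5)$: appending $s$ copies of $S_{2,5}$ to an LCD $[n_0,5,d_0]$ code produces an LCD $[n_0+31s,5,d_0+16s]$ code. Because every claimed lower endpoint increases by exactly $16$ when $s$ increases by $1$, matching the $2^{4}s=16s$ gain of the construction, it is enough to supply, for each residue $t$ appearing in the proposition, one base LCD code of length $n_0$ equal to the smallest length $\ge 5$ in that class (all such $n_0$ lie in $\{5,\dots,35\}$) whose minimum weight equals the lower endpoint. These base codes are read off from the known values of $d_2(n_0,5)$ for small $n_0$ (the cited tables for $n_0\le 24$ and direct computation for the finitely many remaining lengths), and Lemma~\ref{lem:extend} then propagates each to its whole class.

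The main obstacle is the lower bound, specifically the base codes for the two exceptional residues $t\in\{0,16\}$. For every residue in the first two lists a base LCD code attains $g_2-1$, so the range collapses to two values; but at $n_0=31$ and $n_0=16$ the best LCD code reaches only $g_2-2$ (for instance the length-$31$ simplex code is self-orthogonal, hence not LCD), which is precisely why those two residues are assigned the wider three-value range. Pinning down these small-length optima, and thereby distinguishing the three families, is where the computational input is essential; the remainder of the argument is the routine bookkeeping of matching offsets across the $22$ residue classes.
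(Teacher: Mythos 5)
Your proposal is correct and follows essentially the same route as the paper, which states this proposition as the corrected form of \cite[Proposition~3]{AH}: the upper bounds are exactly the Griesmer values $g_2(n,5)$ (your residue-by-residue comparison with $\lfloor 16n/31\rfloor$, including the $t\equiv 12$ correction, checks out against Table~\ref{Tab:G-bound-2-5}), and the lower bounds come from base LCD codes at the smallest length in each residue class propagated by Lemma~\ref{lem:extend}. The only caveat is that the existence of the base codes at the lengths $25\le n_0\le 35$ is genuinely imported from the computations in \cite{AH} rather than rederived, but that matches the paper's own reliance on that reference.
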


\subsection{New results on $d_2(n,5)$}

As described above,
Lemma~\ref{lem:Gb}~(i) is a powerful tool for our study on $d_2(n,5)$.
Lemma~\ref{lem:Gb}~(i) gives the following:
\begin{equation}\label{eq:2-5}
d_2(n,5) \le g_2(n,5)-1 \text{ if } n \equiv 0,6,9,13,16,21,24,28 \pmod{31}
\end{equation}
for $n \ge 5$.

\begin{prop}\label{prop:2-5-1}
For $n \ge 5$,
\begin{align*}
d_2(n,5) = g_2(n,5)-1 \text{ if }  n \equiv 6,9,13,21,24,28 \pmod{31}.
 \end{align*}
\end{prop}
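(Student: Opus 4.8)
The plan is to match the upper bound already recorded in~\eqref{eq:2-5} with a construction. Each of the residues $6,9,13,21,24,28$ occurs in the list of~\eqref{eq:2-5}, so for every $n \ge 5$ in these classes we already have $d_2(n,5) \le g_2(n,5)-1$ for free. Thus the whole task is to exhibit, for each such $n$, a binary LCD $[n,5]$ code whose minimum weight attains $g_2(n,5)-1$.

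For the lower bound I would reduce to finitely many base cases by means of Lemma~\ref{lem:extend}. Writing $n=31s+t$ with $t\in\{6,9,13,21,24,28\}$, Table~\ref{Tab:G-bound-2-5} gives $g_2(n,5)-1=16s+\alpha(t)$, where
\[
(\alpha(6),\alpha(9),\alpha(13),\alpha(21),\alpha(24),\alpha(28))=(1,3,5,9,11,13).
\]
Since $[5]_2=31$ and $2^{5-1}=16$, Lemma~\ref{lem:extend} turns a binary LCD $[t,5,\alpha(t)]$ code into a binary LCD $[31s+t,5,16s+\alpha(t)]$ code for every positive integer $s$. Hence it suffices to produce a single binary LCD code with parameters $[t,5,\alpha(t)]$ for each $t$, namely $[6,5,1]$, $[9,5,3]$, $[13,5,5]$, $[21,5,9]$, $[24,5,11]$ and $[28,5,13]$; the case $s=0$ is then the base code itself, and the cases $s\ge 1$ follow by the extension, so together they cover all $n\ge 5$ in these residue classes.

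For the base codes of lengths $6,9,13,21,24$ I would invoke the known determinations of $d_2(n,5)$ for $n\le 24$ (\cite[Table~15]{AH} and~\cite[Table~3]{HS}): in each case the tabulated value equals $g_2(t,5)-1=\alpha(t)$, so an LCD code meeting the claimed weight exists. The single length outside the tabulated range is $t=28$, so the crux is to construct a binary LCD $[28,5,13]$ code. Here I would use the framework of Section~\ref{sec:2}: by the Remark following Lemma~\ref{lem:mi}, every LCD $[28,5]$ code with dual distance at least $2$ and minimum weight at least $13$ is equivalent to some $C_{2,5}(m)$ with $\sum_i m_i=28$ and each $m_i$ confined to the interval~\eqref{eq:mi}, so a finite search over the admissible $m$ locates such a code (or one writes down an explicit generator matrix $G$ and verifies directly that $GG^T$ is nonsingular and that the minimum weight is $13$).

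The main obstacle is precisely this base case $[28,5,13]$, which lies beyond the range $n\le 24$ where $d_2(n,5)$ is already known; everything else is either the routine extension described above or the upper bound from~\eqref{eq:2-5}. Once the existence of the $[28,5,13]$ LCD code is secured, combining it and the five tabulated base codes with Lemma~\ref{lem:extend} yields $d_2(n,5)\ge g_2(n,5)-1$ for all the stated residues, and the reverse inequality~\eqref{eq:2-5} gives equality.
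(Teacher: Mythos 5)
Your proposal is correct in its overall logic and shares the paper's skeleton: the upper bound is exactly~\eqref{eq:2-5} (i.e.\ Lemma~\ref{lem:Gb}~(i)), and the content is the matching lower bound. Where you diverge is in how the lower bound is obtained. The paper simply cites Proposition~\ref{prop:correct} (the corrected results of~\cite{AH}), which already asserts $d_2(n,5)\ge g_2(n,5)-1$ for all six residue classes, and is done in one line. You instead rebuild that lower bound from scratch via Lemma~\ref{lem:extend}, reducing to the six base codes $[6,5,1]$, $[9,5,3]$, $[13,5,5]$, $[21,5,9]$, $[24,5,11]$, $[28,5,13]$; your values of $\alpha(t)$ agree with Table~\ref{Tab:G-bound-2-5}, and the reduction is sound since $k=5\ge k_0=3$. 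This buys a more self-contained argument (it is essentially how the constructions behind~\cite{AH} work anyway), at the cost of one piece the paper does not need to touch: the base case $t=28$ lies outside the range $n\le 24$ of the known tables, and you correctly flag the binary LCD $[28,5,13]$ code as the crux but only describe a search for it (via $C_{2,5}(m)$ and~\eqref{eq:mi}) rather than exhibiting it. Such a code does exist --- it is implicit in Proposition~\ref{prop:correct} for $n\equiv 28\pmod{31}$ --- so nothing in your argument fails, but as written your proof is incomplete at exactly that point; either carry out the search and record a generator matrix, or fall back on citing Proposition~\ref{prop:correct} as the paper does.
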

\begin{proof}
It is known that
there is a binary LCD $[n,5,g_2(n,5)-1]$ code
for $n \ge 5$ and  $n \equiv 6,9,13,21,24,28 \pmod{31}$
(see Proposition~\ref{prop:correct}).
The result follows from~\eqref{eq:2-5}.
\end{proof}

In order to apply Proposition~\ref{prop:main2} to this case,
write $n=31s+t$, where $s \in \ZZ_{\ge 0}$ and
\[
t \in \{0,1,2,8,10,12,14,15,16,17,18,23,25,27,29,30\}.
\]
Suppose that
\[
d(s,t)=
\begin{cases}
 g_2(31s+t,5)-1& \text{ if }t \in\{0,16\},\\
 g_2(31s+t,5)  & \text{ otherwise.} 
\end{cases}
\]
Let $r=r_{2,31s+t,5,d(s,t)}$ be the integer defined in~\eqref{eq:r},
where $r$ is listed in Table~\ref{Tab:2-5-0a}.
Note that $d(s,t)$ is written as
$16s+\alpha(t)$, where $\alpha(t)$ is a constant depending on only $t$.
Since $d(s,t)$ satisfies 
the assumption~\eqref{eq:as} in Proposition~\ref{prop:main2},
we have the following:

\begin{prop}\label{prop:2-5-main}
If there is no binary LCD $[2r,5,r]$ code
with dual distance $d^\perp \ge 2$,
then there is no binary LCD $[31s+t,5,d(s,t)]$ code for
every integer $s$.
\end{prop}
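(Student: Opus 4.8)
The plan is to obtain Proposition~\ref{prop:2-5-main} as nothing more than the binary specialization of Proposition~\ref{prop:main2} to $q=2$ and $k=5$, so the work is to match the parameters and verify the hypotheses rather than to argue anything new. First I would record the relevant numerical instances: since $[5]_2=31$ and $2^{5-1}=16$, the length decomposition $n=[k]_q\cdot s+t$ of Proposition~\ref{prop:main2} becomes $n=31s+t$ and the minimum-weight form $d(s,t)=q^{k-1}s+\alpha(t)$ becomes $d(s,t)=16s+\alpha(t)$, exactly the set-up fixed in the paragraph preceding the statement. Because $q=2$ forces $(q-1)r=r$ and $qr=2r$, the reduced parameters $[qr,k,(q-1)r]$ of Proposition~\ref{prop:main2} collapse to $[2r,5,r]$, which is precisely the code named in the hypothesis of Proposition~\ref{prop:2-5-main}.

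Next I would check the three hypotheses of Proposition~\ref{prop:main2}. The dimension condition $k\ge k_0$ is immediate since $5\ge 3$. The structural assumption~\eqref{eq:as} holds because, for each admissible residue $t$, Table~\ref{Tab:G-bound-2-5} exhibits $g_2(31s+t,5)=16s+c_t$ with $c_t$ independent of $s$; hence $d(s,t)=16s+\alpha(t)$ with $\alpha(t)=c_t$ (and $\alpha(t)=c_t-1$ for $t\in\{0,16\}$), so $\alpha(t)$ depends only on $t$, as required. The remaining condition is the inequality $qr\ge k$, i.e.\ $2r\ge 5$. The key simplification here is that substituting $d(s,t)=16s+\alpha(t)$ into~\eqref{eq:r} cancels the $s$-dependent terms and leaves $r=16t-31\alpha(t)$, independent of $s$; reading the resulting values from Table~\ref{Tab:2-5-0a} one finds $r\ge 3$ in every case, so $2r\ge 5$ throughout.

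With the hypotheses verified I would simply invoke Proposition~\ref{prop:main2}: under the assumption that no binary LCD $[2r,5,r]$ code with dual distance $d^\perp\ge 2$ exists, it yields the nonexistence of a binary LCD $[31s+t,5,d(s,t)]$ code for every integer $s$, which is the claim. I expect no genuine obstacle, as this is an instantiation rather than a fresh argument; the only point requiring care is the bookkeeping of the second paragraph, namely confirming that $d(s,t)$ truly has the form demanded by~\eqref{eq:as} and that every tabulated value of $r$ satisfies $2r\ge 5$, so that Proposition~\ref{prop:main2} applies unconditionally to each listed residue $t$.
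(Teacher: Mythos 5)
Your proposal is correct and matches the paper's own (largely implicit) argument: the paper derives Proposition~\ref{prop:2-5-main} precisely by instantiating Proposition~\ref{prop:main2} with $q=2$, $k=5$, $[5]_2=31$, after noting that $d(s,t)=16s+\alpha(t)$ satisfies assumption~\eqref{eq:as}. Your additional bookkeeping (the cancellation giving $r=16t-31\alpha(t)$ independent of $s$, and the check that $2r\ge 5$ for every tabulated $r$) is accurate and simply makes explicit what the paper leaves to the reader.
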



\begin{table}[thb]
\caption{$r$ in Proposition~\ref{prop:2-5-main}}
\label{Tab:2-5-0a}
\begin{center}
{\small
\begin{tabular}{c|c||c|c||c|c||c|c}
\noalign{\hrule height0.8pt}
  $n$ & $r$ &  $n$ & $r$ &  $n$ & $r$ &  $n$ & $r$ \\
\hline
$31s$   & $31$ &$31s+10$& $36$ &$31s+16$& $39$ &$31s+25$& $28$ \\
$31s+1$ & $16$ &$31s+12$& $37$ &$31s+17$& $24$ &$31s+27$& $29$ \\
$31s+2$ & $32$ &$31s+14$& $38$ &$31s+18$& $40$ &$31s+29$& $30$ \\
$31s+8$ & $35$ &$31s+15$& $23$ &$31s+23$& $27$ &$31s+30$& $15$ \\
\noalign{\hrule height0.8pt}
\end{tabular}
}
\end{center}
\end{table}

By Proposition~\ref{prop:2-5-main}, we examine the nonexistence
of a binary LCD $[2r,5,r]$ code for $r$ in Table~\ref{Tab:2-5-0a}.
It is known that
there is no binary LCD $[30,5,15]$ code~\cite{AH}.
As described in Remark~\ref{rem},
it is possible to find
representatives of all equivalence classes of binary LCD
$[n,5,d]$ codes with dual distances $d^\perp \ge 2$
as $C_{2,5}(m)$, by considering all vectors $m=(m_1,m_2,\ldots,m_{31}) \in
\mathbb{Z}_{\ge 0}^{31}$ 
such that $n=\sum_{i=1}^{31}m_i$ and~\eqref{eq:mi}
for a given set of parameters $n,d$.
Moreover, 
we may assume without loss of generality that 
\[
m_i \ge 1\ (i \in \{1,2,4,8,16\})
\text{ and } \sum_{i \in {\mathcal S}} m_i = d,
\]
where 
$\mathcal S$ is the support of the first row of the matrix $S_{2,5}$.
In this way,
our exhaustive computer search  shows that
there is no binary LCD $[2r,5,r]$ code with dual distance $d^\perp \ge 2$
for only $r \in \{16,23,24,27,28,29,30,31\}$, due to the computational complexity.
For reference, 
the time required for the computer search of $r=24$, 
which corresponds to a single core of
a computer with Intel i7,
is approximately $1999$ days.
For the remaining cases $r$, 
an exhaustive search remains a computational challenge.

By Proposition~\ref{prop:2-5-main},
we have
\[
d_2(31s+t,5)\le
\begin{cases}
g_2(31s+t,5)-1 \text{ if } t \in \{1,15,17,23,25,27,29,30\},\\ 
g_2(31s+t,5)-2 \text{ if } t =0. 
\end{cases}
\]
By Proposition~\ref{prop:correct},
$d_2(31s+t,5) \ge g_2(31s+t,5)-1$  if  $t \in \{1,15,17,23,25,27,29,30\}$
and $d_2(31s+t,5) \ge g_2(31s+t,5)-2$  if  $t=0$.
Therefore, we have the following improvement of Proposition~\ref{prop:correct}.

\begin{thm}\label{thm:2-5} 
If $n \equiv 1,9,13,15,17,21,23,24,25,27,28,29,30 \pmod{31}$
and $n \ge 5$, then
\[
d_2(n,5)=\left\lfloor \frac{16n}{31}\right\rfloor-1.
\]
If $n \equiv 0, 6 \pmod{31}$
and $n \ge 5$, then
\[
d_2(n,5)=\left\lfloor \frac{16n}{31}\right\rfloor-2.
\]
\end{thm}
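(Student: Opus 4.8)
The plan is to assemble the upper and lower bounds already in hand for each residue class and then rewrite the Griesmer quantity $g_2(31s+t,5)$ in the floor form $\lfloor 16n/31\rfloor$ used in the statement. First I would partition the residues appearing in the theorem into two families according to which earlier result supplies the answer: the family $t\in\{6,9,13,21,24,28\}$, for which Proposition~\ref{prop:2-5-1} already gives $d_2(n,5)=g_2(n,5)-1$, and the family $t\in\{0,1,15,17,23,25,27,29,30\}$, for which the displayed bounds obtained from Proposition~\ref{prop:2-5-main} apply. Checking against the congruences in the statement, these two families together exhaust the thirteen residues of the first assertion and the two residues of the second.

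Next I would translate $g_2$ into $\lfloor 16n/31\rfloor$ using the identity $\lfloor 16(31s+t)/31\rfloor=16s+\lfloor 16t/31\rfloor$ together with Table~\ref{Tab:G-bound-2-5}. This comparison shows $g_2(31s+t,5)=\lfloor 16n/31\rfloor$ for $t\in\{1,9,13,15,17,21,23,24,25,27,28,29,30\}$ and $g_2(31s+t,5)=\lfloor 16n/31\rfloor-1$ for $t=6$. Consequently the equality $d_2=g_2-1$ from Proposition~\ref{prop:2-5-1} reads as $d_2=\lfloor 16n/31\rfloor-1$ for $t\in\{9,13,21,24,28\}$ (contributing to the first assertion) and as $d_2=\lfloor 16n/31\rfloor-2$ for $t=6$ (contributing to the second).

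For the remaining residues I would combine the two available bounds. For $t\in\{1,15,17,23,25,27,29,30\}$, Proposition~\ref{prop:2-5-main} gives $d_2\le g_2-1$ via the nonexistence of a binary LCD $[2r,5,r]$ code with $d^\perp\ge 2$ for the corresponding $r\in\{15,16,23,24,27,28,29,30\}$ of Table~\ref{Tab:2-5-0a}---the cases $r\in\{16,23,24,27,28,29,30\}$ coming from the exhaustive search and $r=15$ (namely $[30,5,15]$) being the previously known nonexistence---while Proposition~\ref{prop:correct} supplies the matching lower bound $d_2\ge g_2-1$; hence $d_2=\lfloor 16n/31\rfloor-1$ there. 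For $t=0$, Proposition~\ref{prop:2-5-main} (with $r=31$ and $d(s,0)=g_2-1$) gives $d_2\le g_2-2$ and Proposition~\ref{prop:correct} gives $d_2\ge g_2-2$, so $d_2=\lfloor 16n/31\rfloor-2$, completing the second assertion.

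The genuinely hard ingredient---the exhaustive-search nonexistence of the LCD $[2r,5,r]$ codes underlying the upper bounds---has already been discharged in the discussion preceding the statement, so no new obstacle arises at this stage. The only point that requires care is the bookkeeping: one must verify that the $t=0$ and $t=6$ cases fall into the second assertion rather than the first, and this is precisely controlled by whether $g_2$ equals $\lfloor 16n/31\rfloor$ or $\lfloor 16n/31\rfloor-1$ for those two residues.
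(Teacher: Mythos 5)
Your proposal is correct and follows essentially the same route as the paper: the upper bounds come from Lemma~\ref{lem:Gb}~(i) (via Proposition~\ref{prop:2-5-1}) for $t\in\{6,9,13,21,24,28\}$ and from Proposition~\ref{prop:2-5-main} together with the nonexistence of the LCD $[2r,5,r]$ codes for $t\in\{0,1,15,17,23,25,27,29,30\}$, while the matching lower bounds are exactly those of Proposition~\ref{prop:correct}. The translation between $g_2(31s+t,5)$ and $\lfloor 16n/31\rfloor$ (equality for all relevant $t$ except $t=6$, where it is off by one) is the same bookkeeping the paper performs implicitly, and your residue accounting is accurate.
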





\section{Ternary LCD codes of dimension 4}
\label{sec:3-4}

In this section, we study the largest minimum weights $d_3(n,4)$.
For $n \ge 4$,
write $n=40s+t$, where $s \in \ZZ_{\ge 0}$ and $t \in \{0,1,\ldots,39\}$.
We list $g_3(40s+t,4)$ in Table~\ref{Tab:G-bound-3-4}.

\begin{table}[thbp]
\caption{$g_3(40s+t,4)$}
\label{Tab:G-bound-3-4}
\begin{center}
{\small
\begin{tabular}{c|c||c|c||c|c}
\noalign{\hrule height0.8pt}
$n$ & $g_3(n,4)$ & $n$ & $g_3(n,4)$ & $n$ & $g_3(n,4)$  \\\hline
$40s  $ & $27s$    &$40s+14$ & $27s+9$  &$40s+28$ & $27s+18$ \\
$40s+1$ & $27s$    &$40s+15$ & $27s+9$  &$40s+29$ & $27s+18$ \\
$40s+2$ & $27s$    &$40s+16$ & $27s+9$  &$40s+30$ & $27s+19$ \\
$40s+3$ & $27s$    &$40s+17$ & $27s+10$ &$40s+31$ & $27s+20$ \\
$40s+4$ & $27s+1$  &$40s+18$ & $27s+11$ &$40s+32$ & $27s+21$ \\
$40s+5$ & $27s+2$  &$40s+19$ & $27s+12$ &$40s+33$ & $27s+21$ \\
$40s+6$ & $27s+3$  &$40s+20$ & $27s+12$ &$40s+34$ & $27s+22$ \\
$40s+7$ & $27s+3$  &$40s+21$ & $27s+13$ &$40s+35$ & $27s+23$ \\
$40s+8$ & $27s+4$  &$40s+22$ & $27s+14$ &$40s+36$ & $27s+24$ \\
$40s+9$ & $27s+5$  &$40s+23$ & $27s+15$ &$40s+37$ & $27s+24$ \\
$40s+10$ & $27s+6$ &$40s+24$ & $27s+15$ &$40s+38$ & $27s+25$ \\
$40s+11$ & $27s+6$ &$40s+25$ & $27s+16$ &$40s+39$ & $27s+26$ \\
$40s+12$ & $27s+7$ &$40s+26$ & $27s+17$ &&\\
$40s+13$ & $27s+8$ &$40s+27$ & $27s+18$ &&\\
\noalign{\hrule height0.8pt}
\end{tabular}
}
\end{center}
\end{table}


It is known that
there is a ternary LCD $[n,4,d]$ code for
\[
(n,d)\in\{(4,1),(5,2),(6,2),(7,3),(8,4),(10,5)\}
\]
(see~\cite[Table~4 and Proposition~5]{AH-C}).
By considering vectors $m$
in generator matrices $G_{3,4}(m)$ of form~\eqref{eq:Gqkm}
such that $n=\sum_{i=1}^{40}m_i$ and~\eqref{eq:mi},
we found a ternary LCD $[n,4,d]$ code $T_n$ for
\[
(n,d) \in 
\left\{
\begin{array}{l}
 (11,6),(14,8),(16,9),(17,10),(19,11),(20,12),(24,15),\\
 (26,16),(29,18), (30,19),(33,21),(35,22),(36,23),\\
 (38,24),(39,25)
\end{array}
\right\}
\]
as codes $C_{3,4}(m)$.
The vectors $m$ are listed in Table~\ref{Tab:3-4-m}.
By Lemma~\ref{lem:extend}, we have the following:

\begin{prop}\label{prop:4-0}
\begin{enumerate}
\item 
If $n \equiv 4,5,7,8,11,16,17,20,24,29,30,33 \pmod{40}$
and $n \ge 4$,
then
there is a ternary LCD $[n,4,g_3(n,4)]$ code.
\item 
If $n \equiv  6,10,14,19,26,35,36,38,39 \pmod{40}$ and $n \ge 4$, then
there is a ternary LCD $[n,4,g_3(n,4)-1]$ code.
\end{enumerate}
 \end{prop}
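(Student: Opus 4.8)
The plan is to prove both parts as immediate applications of the extension method in Lemma~\ref{lem:extend}, combined with the base codes $T_n$ that were just exhibited via their vectors $m$ in Table~\ref{Tab:3-4-m}. First I would set up the arithmetic of the residue classes. For each residue $t$ listed in part (i), I would identify the smallest length $n_0 \equiv t \pmod{40}$ for which a ternary LCD $[n_0,4,d_0]$ code is already known to exist, where $d_0 = g_3(n_0,4)$; these base codes come either from the explicitly listed family $(n,d) \in \{(4,1),(5,2),(6,2),(7,3),(8,4),(10,5)\}$ (drawn from~\cite{AH-C}) or from the new codes $T_n$ with $(n,d)$ in the displayed list. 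The key observation is that adding one copy of the simplex generator $S_{3,4}$ increases the length by $[4]_3 = 40$ and the minimum weight by $q^{k-1} = 3^3 = 27$, so applying Lemma~\ref{lem:extend} with $s$ copies produces a ternary LCD $[n_0 + 40s, 4, d_0 + 27s]$ code for every positive integer $s$.

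The heart of the argument is then the bookkeeping that $d_0 + 27s$ matches the claimed minimum weight $g_3(n_0+40s,4)$. Since $g_3(n,4)$ is (by construction of the Griesmer function) nondecreasing and behaves additively under $n \mapsto n+40$ in the sense that $g_3(n+40,4) = g_3(n,4)+27$ — which one reads directly from Table~\ref{Tab:G-bound-3-4}, where each entry in a fixed residue column is $27s$ plus a constant depending only on $t$ — it suffices to verify the base equality $d_0 = g_3(n_0,4)$ for part (i) at the single smallest length in each class, and then the equality propagates. Concretely, for the base lengths $n_0 \in \{4,5,7,8,11,16,17,20,24,29,30,33\}$ I would check against Table~\ref{Tab:G-bound-3-4} that the minimum weight of the known LCD code equals $g_3(n_0,4)$ exactly; this is a finite, routine table comparison.

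For part (ii), the structure is identical except the base codes achieve $g_3(n_0,4)-1$ rather than $g_3(n_0,4)$. Here I would take the base lengths $n_0 \in \{6,10,14,19,26,35,36,38,39\}$, again using the listed codes or the $T_n$ from Table~\ref{Tab:3-4-m}, verify that the minimum weight $d_0$ of each known LCD code equals $g_3(n_0,4)-1$, and then extend by $s$ copies of $S_{3,4}$ via Lemma~\ref{lem:extend}. Because both $g_3$ and the constructed weight shift by $27$ per step of $s$, the relation $d_0 = g_3(n_0,4)-1$ is preserved, yielding a ternary LCD $[n_0+40s,4,g_3(n_0+40s,4)-1]$ code for all $s$. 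The two parts together cover every residue class $t$ appearing in the statement.

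The main obstacle is not the extension step, which is purely mechanical given Lemma~\ref{lem:extend}, but rather ensuring the base codes genuinely exist with the asserted parameters and that no residue class is overlooked. The existence of the awkward base codes $T_n$ for $(n,d) \in \{(11,6),(14,8),\ldots,(39,25)\}$ is not something one proves by hand; it rests on the exhaustive computer search over vectors $m$ satisfying $n = \sum_{i=1}^{40} m_i$ together with the bounds~\eqref{eq:mi} of Lemma~\ref{lem:mi}, as recorded in Table~\ref{Tab:3-4-m}. Thus the real content is the finite verification — partly by computer — that each listed $T_n$ is LCD with the stated minimum weight, after which the proof reduces to invoking Lemma~\ref{lem:extend} and matching the resulting parameters against Table~\ref{Tab:G-bound-3-4}.
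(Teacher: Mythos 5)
Your proposal is correct and is essentially the paper's own argument: the paper exhibits the base codes (the $[n,4,d]$ codes with $(n,d)\in\{(4,1),(5,2),(6,2),(7,3),(8,4),(10,5)\}$ from~\cite{AH-C} together with the computer-found codes $T_n$ of Table~\ref{Tab:3-4-m}) and then simply invokes Lemma~\ref{lem:extend}, relying on the same observation that $g_3(n+40,4)=g_3(n,4)+27$ so that the parameters propagate along each residue class. Your write-up just makes the bookkeeping explicit.
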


\begin{table}[thb]
 \caption{Codes $T_n$ $(n \in \{11,14,16,17,19,20,24,26,29,30,33,
 35,36,38,39\})$}
\label{Tab:3-4-m}
\begin{center}
{\small
\begin{tabular}{c|c}
\noalign{\hrule height0.8pt}
Codes & Vectors $m$ \\
\hline
$T_{11}$&$(1100110000010101001000100000010000000001)$\\
$T_{16}$&$(1101200100001201000100100100010001001000)$\\
$T_{17}$&$(1101111100000101111010010100000000101000)$\\
$T_{20}$&$(1111111000010111001110000110100000110001)$\\
$T_{24}$&$(1100111111010111101110100010111100000011)$\\
$T_{29}$&$(1210101110111102110111000110020001020111)$\\
$T_{30}$&$(1110111110111111111110100111110111010010)$\\
$T_{33}$&$(1111111101001101111111111111101011111110)$\\
\hline
$T_{14}$&$(1110111000000111001000100100000000100001)$\\
$T_{19}$&$(1102110100100100201010010000010011010100)$\\
$T_{26}$&$(2100101011101100110110011011101011011101)$\\
$T_{35}$&$(1110111111100122000211001211111012012110)$\\
$T_{36}$&$(1111111111111111111111111101111101100111)$\\ 
$T_{38}$&$(1111111012210121110121011102110012101111)$\\
$T_{39}$&$(2111111211110111101111111111211111101110)$\\
\noalign{\hrule height0.8pt}
\end{tabular}
}
\end{center}
\end{table}


By Lemma~\ref{lem:Gb}~(ii), we have
\begin{equation}\label{eq:3-5}
d_3(n,4) \le g_3(n,4)-1 \text{ if } 
n \equiv
0,6,10,14,19,23,27,32,36 \pmod{40}. 
\end{equation}
Hence, we have the following:

\begin{prop}\label{prop:3-4-1}
\begin{enumerate}
\item 
If $n \equiv 4,5,7,8,11,16,17,20,24,29,30,33 \pmod{40}$ and $n \ge 4$, then
$d_3(n,4)=g_3(n,4)$.
\item 
If  $n \equiv  6,10,14,19,36 \pmod{40}$ and $n \ge 4$, then
$d_3(n,4) = g_3(n,4)-1$. 
\end{enumerate}
\end{prop}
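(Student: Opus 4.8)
The plan is to prove Proposition~\ref{prop:3-4-1} by combining the constructive lower bounds already assembled with the Griesmer upper bound and the divisibility obstruction of Lemma~\ref{lem:Gb}~(ii). For part~(i), the upper bound $d_3(n,4)\le g_3(n,4)$ is immediate from the Griesmer bound, since $g_3(n,4)$ is by definition the largest $d$ satisfying the Griesmer inequality. The matching lower bound is exactly the content of Proposition~\ref{prop:4-0}~(i): for the residues $t\equiv 4,5,7,8,11,16,17,20,24,29,30,33\pmod{40}$ there already exists a ternary LCD $[n,4,g_3(n,4)]$ code. Thus for these residues the two bounds coincide and $d_3(n,4)=g_3(n,4)$, which settles~(i).

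For part~(ii), the argument is a two-sided squeeze between $g_3(n,4)-1$ and $g_3(n,4)-1$. The upper bound $d_3(n,4)\le g_3(n,4)-1$ comes from~\eqref{eq:3-5}, which is the specialization of Lemma~\ref{lem:Gb}~(ii): for each listed residue $t\in\{6,10,14,19,36\}$ one checks that the Griesmer-optimal value $g_3(n,4)$ satisfies $g_3(n,4)\equiv 0\pmod 3$, so no ternary LCD code can meet the Griesmer bound at these parameters, forcing $d_3(n,4)\le g_3(n,4)-1$. The required lower bound $d_3(n,4)\ge g_3(n,4)-1$ is furnished by Proposition~\ref{prop:4-0}~(ii), whose list of residues includes precisely $6,10,14,19,36$ (together with a few others handled elsewhere). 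Combining the two inequalities gives $d_3(n,4)=g_3(n,4)-1$ for these residues.

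The only genuine verification needed is the congruence bookkeeping: one must confirm, residue class by residue class, that $g_3(n,4)\equiv 0\pmod 3$ exactly for the values claimed in~\eqref{eq:3-5}, using the explicit entries of Table~\ref{Tab:G-bound-3-4}. Since every entry has the form $27s+\alpha(t)$ and $27s\equiv 0\pmod 3$, this reduces to checking $\alpha(t)\bmod 3$ for each $t\in\{0,1,\dots,39\}$, a finite and purely arithmetic task. The residues $\{6,10,14,19,36\}$ selected in part~(ii) are exactly those among the divisibility-obstructed classes $\{0,6,10,14,19,23,27,32,36\}$ for which a matching lower-bound construction was already obtained in Proposition~\ref{prop:4-0}~(ii); the remaining obstructed classes $\{0,23,27,32\}$ are omitted here because no construction achieving $g_3(n,4)-1$ is available at this stage of the paper.

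The proof therefore writes itself as a citation of the two parts of Proposition~\ref{prop:4-0}, the Griesmer bound, and~\eqref{eq:3-5}, with no new computation beyond the congruence check. I expect no real obstacle: the main subtlety is simply ensuring that the residue lists in the hypotheses of Proposition~\ref{prop:4-0} and in~\eqref{eq:3-5} are matched correctly, so that each class in the conclusion genuinely has both a construction and an obstruction. The short proof I would write is:
\begin{proof}
Part~(i) follows immediately from Proposition~\ref{prop:4-0}~(i) and the Griesmer bound $d_3(n,4)\le g_3(n,4)$. For part~(ii), Proposition~\ref{prop:4-0}~(ii) gives a ternary LCD $[n,4,g_3(n,4)-1]$ code, so $d_3(n,4)\ge g_3(n,4)-1$, while~\eqref{eq:3-5} yields $d_3(n,4)\le g_3(n,4)-1$ for these residues. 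Combining the two inequalities gives the result.
\end{proof}
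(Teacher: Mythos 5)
Your proof is correct and matches the paper's (implicit) argument exactly: the proposition follows by combining the constructions of Proposition~\ref{prop:4-0} with the Griesmer bound for part~(i) and with~\eqref{eq:3-5} for part~(ii). One small caveat in your discussion: the criterion behind~\eqref{eq:3-5} is not merely $g_3(n,4)\equiv 0\pmod 3$ (which also holds for, e.g., $t=7,16,24,33$, where no obstruction is claimed) but that $n$ equals the Griesmer sum for $d=g_3(n,4)$, so that a putative $[n,4,g_3(n,4)]$ code would genuinely meet the Griesmer bound with equality and Lemma~\ref{lem:Gb}~(ii) applies; this extra condition is what singles out the residues $0,6,10,14,19,23,27,32,36$.
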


In order to apply Proposition~\ref{prop:main2} to this case,
write $n=40s+t$, where $s \in \ZZ_{\ge 0}$ and
\[
 t \in \{
 0,1,2,3,9,12,13,15,18,21,22,23,25,26,27,28,31,32,34,35,37,38,39\}.
\]
Suppose that
\[
d(s,t)=
\begin{cases}
 g_3(40s+t,4)-1 &\text{ if } t \in\{0,23,27,32\},\\
 g_3(40s+t,4)   &\text{ otherwise.}
\end{cases}
\]
Let $r=r_{3,40s+t,4,d(s,t)}$ be the integer defined in~\eqref{eq:r},
where $r$ is listed in Table~\ref{Tab:3-4-0}.
Note that $d(s,t)$ is written as
$27s+\alpha(t)$, where $\alpha(t)$ is a constant depending on only $t$.
Since $d(s,t)$ satisfies 
the assumption~\eqref{eq:as} in Proposition~\ref{prop:main2},
we have the following:

\begin{prop}\label{prop:3-4-main}
If there is no ternary LCD $[3r,4,2r]$ code
with dual distance $d^\perp \ge 2$,
then there is no ternary LCD $[40s+t,4,d(s,t)]$ code
for every integer $s$.
\end{prop}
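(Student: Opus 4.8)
The statement of Proposition~\ref{prop:3-4-main} is a direct specialization of the general criterion in Proposition~\ref{prop:main2} to the parameters $q=3$, $k=4$, so the plan is essentially to verify that the hypotheses of Proposition~\ref{prop:main2} are met in this setting and then to read off its conclusion. Here $(q,k_0)=(3,2)$ and $k=4 \ge k_0$, so the general proposition applies; the length is written as $n=[4]_3 \cdot s + t = 40s+t$ since $[4]_3 = (3^4-1)/(3-1)=40$, matching the table headings.

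\medskip

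First I would confirm that the minimum-weight target $d(s,t)$ satisfies assumption~\eqref{eq:as}: the proposition's hypothesis requires that $d$ be expressible as $q^{k-1}s + \alpha(t) = 27s + \alpha(t)$ with $\alpha$ depending only on $t$, and the text immediately before the statement records exactly this, noting $d(s,t) = 27s + \alpha(t)$. So this hypothesis is satisfied by construction. Second, I would identify the integer $r = r_{3,40s+t,4,d(s,t)}$ from~\eqref{eq:r} and check the numerical input $qr \ge k$, i.e. $3r \ge 4$; since every value of $r$ listed in Table~\ref{Tab:3-4-0} is a positive integer, this inequality holds trivially. Third, I would compute the parameter string $[qr,k,(q-1)r] = [3r,4,2r]$ that appears in the hypothesis of Proposition~\ref{prop:main2}, which is precisely the parameter set of the auxiliary code whose nonexistence we assume.

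\medskip

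With these identifications in place, the conclusion follows by invoking Proposition~\ref{prop:main2} verbatim: if there is no ternary LCD $[3r,4,2r]$ code with dual distance $d^\perp \ge 2$, then there is no ternary LCD $[40s+t,4,d(s,t)]$ code for any integer $s$. The equivalence $[qr,k,(q-1)r] = [[k]_q(s'-1)+t, k, q^{k-1}(s'-1)+\alpha(t)]$ guaranteed by~\eqref{eq:s0} in the general proposition is what lets the single ``anchor'' length $3r$ propagate to all values of $s$, and I would not reprove that equivalence here since it is internal to Proposition~\ref{prop:main2}.

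\medskip

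The only genuine content beyond citing the general proposition is arithmetic bookkeeping: one must verify that the tabulated values of $r$ are the correct outputs of formula~\eqref{eq:r} for each residue $t$ in the stated list, and that the choice of $d(s,t)$ (subtracting one when $t \in \{0,23,27,32\}$) is the right target—these are the residues where Lemma~\ref{lem:Gb}~(ii) via~\eqref{eq:3-5} already forces $d_3(n,4) \le g_3(n,4)-1$. I expect no real obstacle in the proof itself; the substantive difficulty is deferred to the separate task of actually establishing the nonexistence of the small anchor codes $[3r,4,2r]$, which is handled afterward by exhaustive search over the vectors $m$ as in Remark~\ref{rem}, not within this proposition.
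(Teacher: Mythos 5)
Your proposal is correct and matches the paper's treatment: the paper likewise presents this proposition as an immediate specialization of Proposition~\ref{prop:main2} to $q=3$, $k=4$, after noting only that $d(s,t)=27s+\alpha(t)$ satisfies assumption~\eqref{eq:as}, with the nonexistence of the anchor codes $[3r,4,2r]$ deferred to the subsequent computer search. Your additional checks ($[4]_3=40$, $qr\ge k$, and the identification $[qr,k,(q-1)r]=[3r,4,2r]$) are exactly the routine verifications the paper leaves implicit.
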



\begin{table}[thb]
\caption{$r$ in Proposition~\ref{prop:3-4-main}}
\label{Tab:3-4-0}
\begin{center}
{\small
\begin{tabular}{c|c||c|c||c|c||c|c}
\noalign{\hrule height0.8pt}
$n$ & $r$ &$n$ & $r$ &$n$ & $r$ &$n$ & $r$ \\
\hline
$40s   $&40&$40s+13$&31&$40s+25$&35&$40s+34$&38\\
$40s+ 1$&27&$40s+15$&45&$40s+26$&22&$40s+35$&25\\
$40s+ 2$&54&$40s+18$&46&$40s+27$&49&$40s+37$&39\\
$40s+ 3$&81&$40s+21$&47&$40s+28$&36&$40s+38$&26\\
$40s+ 9$&43&$40s+22$&34&$40s+31$&37&$40s+39$&13\\
$40s+12$&44&$40s+23$&61&$40s+32$&64&&\\
\noalign{\hrule height0.8pt}
\end{tabular}
}
\end{center}
\end{table}

By Proposition~\ref{prop:3-4-main}, we examine the nonexistence
of a ternary LCD $[3r,4,2r]$ code for $r$ in Table~\ref{Tab:3-4-0}.
As described in Remark~\ref{rem},
it is possible to find
representatives of all equivalence classes of ternary LCD
$[n,4,d]$ codes with dual distances $d^\perp \ge 2$
as the codes $C_{3,4}(m)$,
by considering all vectors $m=(m_1,m_2,\ldots,m_{40}) \in
\mathbb{Z}_{\ge 0}^{40}$ 
such that $n=\sum_{i=1}^{40}m_i$ and~\eqref{eq:mi}
for a given set of parameters $n,d$.
Moreover, 
we may assume without loss of generality that 
\[
m_i \ge 1\ (i \in \{1,2,5,14\})
\text{ and } \sum_{i \in {\mathcal S}} m_i = d,
\]
where 
$\mathcal S$ is the support of the first row of the matrix $S_{3,4}$.
In this way,
our exhaustive computer search shows that
there is no ternary LCD $[3r,4,2r]$ code with dual distance $d^\perp \ge 2$
for only
$r \in \{13,22,25,26\}$,
due to the computational complexity.
For reference,
the time required for the computer search of $r=22$, 
which corresponds to a single core of
a computer with Intel i7,
is approximately 709 days.
For the remaining cases $r$, 
an exhaustive search remains a computational challenge.

Therefore, by Proposition~\ref{prop:3-4-main},
we have
\[
d_3(40s+t,4)\le
g_3(40s+t,4)-1 \text{ if } t \in \{26,35,38,39\}.
\]
From Proposition~\ref{prop:4-0}~(ii), we have the following:

\begin{prop}For $n \ge 4$,
\[
 d_3(n,4)=g_3(n,4)-1  \text{ if } n
 \equiv 26,35,38,39 \pmod{40}.
\]
\end{prop}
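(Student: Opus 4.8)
The plan is to prove the claimed equality by combining a matching upper and lower bound, both of which are already available from the preceding results for the four residues $t \in \{26,35,38,39\}$.

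First I would establish the upper bound $d_3(n,4) \le g_3(n,4)-1$. Since none of $26,35,38,39$ lies in $\{0,23,27,32\}$, the target minimum weight appearing in Proposition~\ref{prop:3-4-main} is $d(s,t)=g_3(40s+t,4)$ for each of these residues. Reading Table~\ref{Tab:3-4-0}, the four residues $t=26,35,38,39$ correspond respectively to $r=22,25,26,13$, which are exactly the four values of $r$ for which the exhaustive computer search reported that no ternary LCD $[3r,4,2r]$ code with dual distance $d^\perp \ge 2$ exists. Applying Proposition~\ref{prop:3-4-main} for each of these $r$ then rules out a ternary LCD $[40s+t,4,g_3(40s+t,4)]$ code for every integer $s$, which gives $d_3(n,4) \le g_3(n,4)-1$ for all $n \equiv 26,35,38,39 \pmod{40}$ with $n \ge 4$.

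Next I would supply the matching lower bound. Proposition~\ref{prop:4-0}~(ii) already guarantees a ternary LCD $[n,4,g_3(n,4)-1]$ code for every $n \equiv 6,10,14,19,26,35,36,38,39 \pmod{40}$ with $n \ge 4$, and this list contains all four residues under consideration. These codes arise by applying the extension of Lemma~\ref{lem:extend} to the explicit seed codes $T_n$ recorded in Table~\ref{Tab:3-4-m}. Hence $d_3(n,4) \ge g_3(n,4)-1$ for the relevant $n$, and combining the two bounds yields the asserted equality.

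The deduction itself is short; the genuine work has already been carried out upstream, namely the exhaustive nonexistence search for $r \in \{13,22,25,26\}$ underlying Proposition~\ref{prop:3-4-main} and the construction of the seed codes $T_n$ underlying Proposition~\ref{prop:4-0}. The only point requiring care is the bookkeeping: verifying that the four residues of interest line up with precisely the four values of $r$ that the search resolved, and that the normalization in Proposition~\ref{prop:3-4-main} places $d(s,t)$ at $g_3(n,4)$ rather than $g_3(n,4)-1$ for each of them, so that the nonexistence statement indeed forces $d_3(n,4)$ below the Griesmer value.
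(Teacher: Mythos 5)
Your proposal is correct and follows exactly the paper's own argument: the upper bound comes from Proposition~\ref{prop:3-4-main} together with the computational nonexistence of ternary LCD $[3r,4,2r]$ codes for $r\in\{13,22,25,26\}$ (which, per Table~\ref{Tab:3-4-0}, are precisely the values attached to $t=39,26,35,38$), and the matching lower bound is Proposition~\ref{prop:4-0}~(ii). The bookkeeping you flag (that $d(s,t)=g_3(40s+t,4)$ for these residues since none lies in $\{0,23,27,32\}$) checks out.
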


We summarize in the following theorem the largest minimum
weights $d_3(n,4)$.

\begin{thm}\label{thm:3-4} 
If
 $n \equiv 4,5,7,8,10,11,14,16,17,19,20,24,26,29,30,33,35,36$,
 $38,39 \pmod{40}$
and $n \ge 4$, then
\[
d_3(n,4)=\left\lfloor \frac{27n}{40}\right\rfloor-1.
\]
If $n \equiv 6  \pmod{40}$
and $n \ge 4$, then
\[
d_3(n,4)=\left\lfloor \frac{27n}{40}\right\rfloor-2.
\]
\end{thm}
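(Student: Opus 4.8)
The plan is to read Theorem~\ref{thm:3-4} as a consolidation of the results already obtained earlier in this section, together with a purely arithmetic translation of the Griesmer function $g_3(n,4)$ into the closed form $\lfloor 27n/40\rfloor$. Concretely, the value $d_3(n,4)$ for every residue class named in the statement has already been pinned down: Proposition~\ref{prop:3-4-1}~(i) gives $d_3(n,4)=g_3(n,4)$ for $t\in\{4,5,7,8,11,16,17,20,24,29,30,33\}$, Proposition~\ref{prop:3-4-1}~(ii) gives $d_3(n,4)=g_3(n,4)-1$ for $t\in\{6,10,14,19,36\}$, and the proposition stated immediately before the theorem gives $d_3(n,4)=g_3(n,4)-1$ for $t\in\{26,35,38,39\}$. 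So no new existence or nonexistence of codes has to be proved; the sole task is to rewrite these in terms of the floor expression.

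First I would record the elementary fact that both $g_3(n,4)$ and $\lfloor 27n/40\rfloor$ increase by exactly $27$ when $n$ is replaced by $n+40$. For the Griesmer function this follows from $[4]_3=40$ and $3^{4-1}=27$: since $27/3^i$ is an integer for $i\le 3$, each ceiling term in the Griesmer sum rises by $3^{3-i}$, so the sum rises by $[4]_3=40$ and hence $g_3(n+40,4)=g_3(n,4)+27$; for the floor it is immediate that $\lfloor 27(n+40)/40\rfloor=\lfloor 27n/40\rfloor+27$. Writing $n=40s+t$, it follows that the difference
\[
\delta(t):=g_3(40s+t,4)-\left\lfloor \frac{27(40s+t)}{40}\right\rfloor
\]
depends only on $t=n\bmod 40$, and equals the corresponding entry of Table~\ref{Tab:G-bound-3-4} with the $27s$ removed, minus $\lfloor 27t/40\rfloor$.

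Next I would simply evaluate $\delta(t)$ on the relevant residues. One finds $\delta(t)=-1$ for $t\in\{4,5,6,7,8,11,16,17,20,24,29,30,33\}$ and $\delta(t)=0$ for $t\in\{10,14,19,26,35,36,38,39\}$. Substituting into the three known results yields the closed form: for the classes of Proposition~\ref{prop:3-4-1}~(i) we get $d_3(n,4)=g_3(n,4)=\lfloor 27n/40\rfloor+\delta(t)=\lfloor 27n/40\rfloor-1$; and for the classes where $d_3(n,4)=g_3(n,4)-1$ we get $d_3(n,4)=\lfloor 27n/40\rfloor+\delta(t)-1$, which equals $\lfloor 27n/40\rfloor-1$ when $\delta(t)=0$ (the residues $10,14,19,26,35,36,38,39$) and equals $\lfloor 27n/40\rfloor-2$ when $\delta(t)=-1$ (the residue $6$). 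This accounts for all twenty residues of the first assertion and the single residue of the second.

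Since every substantive step—the constructions of the codes $T_n$ in Table~\ref{Tab:3-4-m}, the extension argument of Lemma~\ref{lem:extend}, the divisibility obstruction of Lemma~\ref{lem:Gb}~(ii), and the exhaustive nonexistence search underlying Proposition~\ref{prop:3-4-main}—has already been carried out, I do not expect any genuine obstacle here. The only point demanding care is the bookkeeping of the floor function across the twenty-one residue classes: one must keep straight which classes carry $\delta(t)=-1$ and which carry $\delta(t)=0$, since it is exactly this distinction that separates the $-1$ answer from the $-2$ answer at $t=6$.
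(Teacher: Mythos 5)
Your proposal is correct and matches the paper's (implicit) proof: the paper presents Theorem~\ref{thm:3-4} purely as a summary of Proposition~\ref{prop:3-4-1} and the proposition for $n\equiv 26,35,38,39 \pmod{40}$, with the translation from $g_3(n,4)$ to $\lfloor 27n/40\rfloor$ left to the reader. Your computation of $\delta(t)$ on each residue class checks out, and correctly isolates $t=6$ as the unique class where $d_3(n,4)=g_3(n,4)-1$ combines with $\delta(t)=-1$ to give $\lfloor 27n/40\rfloor-2$.
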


Finally, we consider the remaining lengths.
It is known that $d_3(9,4)=4$~\cite[Table~4]{AH-C}.
As described in Section~\ref{sec:1}, 
we determine $d_3(n,k)$ for $n \in \{11,12,\ldots,19\}$
in Section~\ref{sec:d}.
From Table~\ref{Tab:C}, we have
\[
d_3(12,4)=6, d_3(13,4)=7, d_3(15,4)=8,
d_3(18,4)=10.
\]
By considering vectors $m$
in generator matrices $G_{3,4}(m)$ of form~\eqref{eq:Gqkm}
such that $n=\sum_{i=1}^{40}m_i$ and~\eqref{eq:mi},
we found a ternary LCD $[n,4,d]$ code $T_n$ for
\[
(n,d) \in 
\left\{
\begin{array}{l}
 (21,12),(22,13),(23,13),(25,15),(27,16),(28,17), (31,19),\\
 (32,19),(34,21),(37,23),(40,25),(41,26), (42,26),(43,26)
\end{array}
\right\}
\]
as codes $C_{3,4}(m)$.
The vectors $m$ are listed in Table~\ref{Tab:3-4-m3}.
Hence, by
Lemma~\ref{lem:extend} and~\eqref{eq:3-5}, 
we have the following:

\begin{prop}
Suppose that $n \ge 21$. 
If $n \equiv 1, 13, 22, 25, 28, 31, 34, 37 \pmod{40}$, then
\[
 d_3(n,4)=
 \left\lfloor \frac{27n}{40}\right\rfloor \text{ or }
 \left\lfloor \frac{27n}{40}\right\rfloor-1.
\]
If $n \equiv 0, 2,3,9, 12, 15, 18, 21, 23, 27, 32 \pmod{40}$, then
\[
 d_3(n,4)=
 \left\lfloor \frac{27n}{40}\right\rfloor-1 \text{ or }
 \left\lfloor \frac{27n}{40}\right\rfloor-2.
\]
 \end{prop}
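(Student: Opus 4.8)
The plan is to trap $d_3(n,4)$ between a lower bound coming from an explicit construction and an upper bound coming from the Griesmer bound, arranging that for each residue the two bounds differ by exactly one so that only the two stated values survive. Throughout write $n=40s+t$ and recall from Table~\ref{Tab:G-bound-3-4} that $g_3(40s+t,4)=27s+c_t$ for a constant $c_t$ depending only on $t$; the final step will be to rewrite $g_3(n,4)$ and $g_3(n,4)-1$ in the floor notation $\lfloor 27n/40\rfloor$ used in the statement.

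First I would produce the lower bounds. For every residue $t$ occurring in the two lists there is a ternary LCD code that serves as a seed: the known codes of lengths $9,12,13,15,18$ (with $d_3=4,6,7,8,10$) and the codes $T_{n_0}$ of Table~\ref{Tab:3-4-m3} for $n_0\in\{21,22,23,25,27,28,31,32,34,37,40,41,42,43\}$. Applying Lemma~\ref{lem:extend} with $q=3$, $k=4$ and $s$ copies of $S_{3,4}$ enlarges such a seed $[n_0,4,d_0]$ code to a ternary LCD $[\,n_0+40s,\,4,\,d_0+27s\,]$ code, since $[4]_3=40$ and $3^{k-1}=27$. As $s$ ranges over the nonnegative integers this reaches every length $n\ge 21$ in the given residue class, and because $g_3$ also increases by $27$ per block it gives $d_3(40s+t,4)\ge g_3(40s+t,4)-1$ for the residues whose seed meets $g_3(n_0,4)-1$ and $d_3(40s+t,4)\ge g_3(40s+t,4)-2$ for those whose seed meets $g_3(n_0,4)-2$.

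Next I would supply the upper bounds. The Griesmer bound always gives $d_3(n,4)\le g_3(n,4)$. For the residues $t\in\{0,23,27,32\}$ of the second list the extremal weight $g_3(n,4)$ is divisible by $3$ and a code attaining it would meet the Griesmer bound with equality, so Lemma~\ref{lem:Gb}~(ii)---as recorded in~\eqref{eq:3-5}---improves this to $d_3(n,4)\le g_3(n,4)-1$. Combining with the lower bounds: on the first list one has $g_3(n,4)-1\le d_3(n,4)\le g_3(n,4)$, and on the second list one has either $g_3(n,4)-2\le d_3(n,4)\le g_3(n,4)-1$ (for $t\in\{0,23,27,32\}$) or $g_3(n,4)-1\le d_3(n,4)\le g_3(n,4)$ (for the remaining residues). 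In each case $d_3(n,4)$ is confined to a set of two consecutive integers, and substituting the value of $g_3(n,4)$ from Table~\ref{Tab:G-bound-3-4} should then express these as the two floor values displayed in the statement.

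The argument involves no single hard estimate; the work is entirely in the residue-by-residue bookkeeping, and this is where I expect the only real difficulty. One must verify, for all nineteen residues simultaneously, that the deficit of the chosen seed below the Griesmer value is exactly $1$ or $2$ as needed, that each seed already covers the smallest relevant length $n\ge 21$, and---most delicately---that the constant offset $g_3(n,4)-\lfloor 27n/40\rfloor$ is tracked correctly for each class, since it is not the same for all residues and it is precisely this offset that converts the $g_3$-based bounds into the stated floor expressions.
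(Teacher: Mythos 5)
Your proposal is correct in substance and takes essentially the same route as the paper: lower bounds from the seed codes of lengths $9,12,13,15,18$ and the codes $T_{n_0}$ of Table~\ref{Tab:3-4-m3} propagated by Lemma~\ref{lem:extend}, and upper bounds from the Griesmer bound sharpened by Lemma~\ref{lem:Gb}~(ii) (i.e.~\eqref{eq:3-5}) for $t\in\{0,23,27,32\}$. One remark on the bookkeeping you rightly flagged as the delicate point: for $n\equiv 3\pmod{40}$ the offset is $g_3(n,4)=\left\lfloor 27n/40\right\rfloor-2$ (e.g.\ $g_3(43,4)=27$ while $\left\lfloor 27\cdot 43/40\right\rfloor=29$), so the seed $T_{43}$ of minimum weight $26=g_3(43,4)-1$ together with the plain Griesmer bound confines $d_3$ to $\left\{\left\lfloor 27n/40\right\rfloor-3,\left\lfloor 27n/40\right\rfloor-2\right\}$, not to the interval $\left\{\left\lfloor 27n/40\right\rfloor-2,\left\lfloor 27n/40\right\rfloor-1\right\}$ claimed in the statement (whose larger value would even exceed the Griesmer bound). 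This discrepancy is a defect of the printed proposition for the residue $3$ rather than of your method; for the other eighteen residues the substitution works out exactly as you describe.
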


\begin{table}[thbp]
\caption{Codes $T_n$ $(n \in
 \{21,22,23,25,27,28,31,32,34,37, 40,41,42,43\})$}
\label{Tab:3-4-m3}
\begin{center}
{\small
\begin{tabular}{c|c}
\noalign{\hrule height0.8pt}
Codes & Vectors $m$ \\
\hline
$T_{21}$&$(1110110101000111100111011000200001001010)$\\
$T_{22}$&$(1110100101110100000121010101100011101100)$\\
$T_{23}$&$(1120111000200111000110010101010011001110)$\\
$T_{25}$&$(2100111001111101000111011101010010200200)$\\
$T_{27}$&$(2210300120000101110110010110010011010111)$\\
$T_{28}$&$(1110121101110101100011020101100011101210)$\\
$T_{31}$&$(1110111102200112100001011111100021111110)$\\
$T_{32}$&$(2210112002010112100210011100120021100110)$\\
$T_{34}$&$(2210111101100111000122021101211011101020)$\\
$T_{37}$&$(1100111112210112110221111102110001102110)$\\
$T_{40}$&$(2220121001110111100112012112201011202210)$\\
$T_{41}$&$(2210110111201121011121011110111022112111)$\\
$T_{42}$&$(2220110202310122200111012101200031101120)$\\
$T_{43}$&$(3110122202210101200012023003210022001120)$\\
 \noalign{\hrule height0.8pt}
\end{tabular}
}
\end{center}
\end{table}

\section{Ternary LCD codes of lengths up to 20}\label{sec:d}

For $n\in \{1,2,\ldots,10\}$,
the classification of ternary LCD $[n,k]$ codes
was done in~\cite{AH-C}.
In this section, we determine
the largest minimum weights $d_3(n,k)$ among all ternary LCD $[n,k]$
codes for $n \in \{11,12,\ldots,19\}$.
We also determine the largest minimum weights $d_3(20,k)$
with $4$ exceptions.

We employ two methods for constructing ternary codes.
\begin{itemize}
\item Method I:
Every ternary $[n,k,d]$ code is equivalent to a code
with generator matrix of form:
\[
\left(
\begin{array}{cc}
I_k & A
\end{array}
\right),
\]
where $A$ is a $k \times (n-k)$ matrix and 
$I_k$ denotes the identity matrix of order $k$.
Let $r_i$ be the $i$-th row of $A$.
Here, we may assume that $A$ satisfies the following conditions:
\begin{itemize}
\item[\rm (a)]
$r_1=(\0_{n-k-d+1}, \1_{d-1})$,
\item[\rm (b)]
the weight of $r_i$ is at least $d-1$ $(i \in\{2,3,\ldots,k\})$,
\item[\rm (c)]
the first nonzero element of $r_i$ is $1$  $(i \in\{2,3,\ldots,k\})$,
\item[\rm (d)]
$r_1 < r_2 < \cdots < r_k$ if $d \ge 3$ and
$r_1 \le r_2 \le \cdots \le r_k$ if $d \le 2$,
\end{itemize}
where 
we consider some order $<$ on
the set of vectors of length $n-k$.
The set of matrices $A$ is constructed, row by row, under
the assumption that the minimum weight of the ternary 
$[n+m-k,m]$ code with generator matrix of form:
\[
\left(
\begin{array}{ccccc}
      & r_1 \\
I_{m} & \vdots   \\
      & r_m \\
\end{array}
\right)
\]
is at least $d$ for each $m \in \{2,3,\ldots,k-1\}$.
It is obvious that the set of the ternary $[n,k,d]$
codes obtained by this method
contains a set of all inequivalent ternary $[n,k,d]$ codes.

\item Method II:
Let $C$ be a ternary code.
Let $C(t)$ be the set of all codewords which are $0$ in a fixed
coordinate $t$.
The ternary code obtained from $C(t)$ by deleting the
coordinate $t$  in each codeword is called a {\em shortened code} of $C$.
A shortened code of a ternary $[n,k,d]$ code with $d \ge 2$
is a ternary $[n-1,k,d]$ code if the deleted coordinate
is zero in all codewords and a ternary $[n-1,k-1,d']$
code with $d' \ge d$ otherwise.
By considering the inverse operation of shortening,
every ternary $[n,k,d]$ code with $d \ge 2$ is constructed from some
ternary $[n-1,k-1,d']$ code with $d' \ge d$.
It is obvious that the set of the ternary $[n,k,d]$
codes obtained by this method
contains a set of all inequivalent ternary $[n,k,d]$ codes.
This method is useful for small $k$.
\end{itemize}

We describe how our computer calculation determined
the minimum weights $d_3(n,k)$.
Let $d_3^{\text{all}}(n,k)$ denote the largest minimum weight
among all ternary $[n,k]$ codes.
For a given pair $(n,k)$,
we checked whether there is a ternary LCD $[n,k,d_3^{\text{all}}(n,k)]$
code or not, by using one of the above methods.
If there is no ternary LCD $[n,k,d_3^{\text{all}}(n,k)]$ code,
then
we checked whether there is a ternary LCD $[n,k,d_3^{\text{all}}(n,k)-1]$
code or not.
By continuing this process, we determined the minimum weights $d_3(n,k)$
for $n \in \{11,12,\ldots,19\}$.
We also determined the minimum weights $d_3(20,k)$
with $4$ exceptions.
The largest minimum weights $d_3(n,k)$ are known
for $k \in \{1,2,3,n-1,n\}$.
Also, $d_3(n,k)$ are determined in the next section for
$k \in \{n-4,n-3,n-2\}$.
In Table~\ref{Tab:C}, we only list $d_3(n,k)$ for $k \in \{4,5,\ldots,n-5\}$.
For the parameters in the table, 
a ternary LCD code can be obtained electronically from
\url{http://www.math.is.tohoku.ac.jp/~mharada/Paper/LCD3.txt}.
For the parameters marked by $*$ in the table,
ternary LCD codes with the parameters
can be found in~\cite[Table~3]{Sok}.

\begin{table}[thb]
\caption{$d_3(n,k)$ $(n \in \{11,12,\ldots,20\})$}
\label{Tab:C}
\begin{center}
{\small
\begin{tabular}{c|lllllllllllllllllllllll}
\noalign{\hrule height0.8pt}
$n\backslash k$  
&\multicolumn{1}{c}{4}&\multicolumn{1}{c}{5}&\multicolumn{1}{c}{6}
&\multicolumn{1}{c}{7}&\multicolumn{1}{c}{8}&\multicolumn{1}{c}{9}
&\multicolumn{1}{c}{10}&\multicolumn{1}{c}{11}&\multicolumn{1}{c}{12}
&\multicolumn{1}{c}{13}&\multicolumn{1}{c}{14}&\multicolumn{1}{c}{15}\\
 \hline
$11$&$ 6  $&$ 5$&$ 4$&   &  &  &  &   &   &   &   &     \\
$12$&$ 6^*$&$ 5$&$ 5^*$& $ 4$&  &  &  &  &   &   &   &     \\
$13$&$ 7  $&$ 6$&$ 6$  & $ 5$& $ 4$&  &  &  &  &   &   &     \\
$14$&$ 8^*$&$ 7$&$ 6^*$& $ 6$& $ 5^*$& $ 4$&  &  &  &  &   &     \\
$15$&$ 8  $&$ 8$&$ 7$  & $ 6$& $ 5$& $ 4$& $ 4$&  &  &  &  &     \\
$16$&$ 9^*$&$ 8$&$ 7^*$& $ 6$& $ 6^*$& $ 5$& $ 4^*$& $ 4$&  &  &  &    \\
$17$&$10  $&$ 9$&$ 8$& $ 7$& $ 6$& $ 6$& $ 5$& $ 4$& $ 4$&  &  &    \\
$18$&$10^*$&$ 9$&$ 9$& $ 8$& $ 7^*$& $ 6$& $ 6$& $ 5$& $ 4^*$& $ 4$&  &    \\
$19$&$11$&$10$&$ 9$&$8$&$ 8$& $ 7$& $ 6$& $ 6$& $ 5$& $ 4$& $ 4$&    \\
$20$&$12$&$11$&$10$&$8, 9$&$8^*$&$7, 8$&$7$&$6$&$5^*, 6$&$5$&$4^*$&$3, 4$\\
\noalign{\hrule height0.8pt}
\end{tabular}
}
\end{center}
\end{table}

Now we emphasize that there is a ternary LCD $[n,k,d]$ code  $C_{n,k,d}$
for
\[
(n,k,d) \in
\left\{
\begin{array}{l}
  (11,5,5),
  (12,6,5),
  (13,6,6),
  (13,7,5),
  (14,7,6), \\
  (14,8,5), 
  (15,5,8),
  (17,5,9),
  (18,6,9) 
\end{array}
\right\}.
\]
The codes $C_{n,k,d}$ have generator matrices
$
\left(\begin{array}{cc}
I_k & M_{n,k,d} 
\end{array}\right)
$ and the matrices $M_{n,k,d}$ are listed in Figure~\ref{Fig:E}.

\begin{thm}
\begin{enumerate}
\item 
For a nonnegative integer $s$, 
\[
\begin{array}{ll}
d_3(121s+17,5)=81s+9, &d_3(364s+13,6)=243s+6,  \\ 
d_3(364s+18,6)=243s+9,&d_3(1093s+14,7)=729s+6.
\end{array}
\]
\item 
For a nonnegative integer $s$, 
\[
\begin{array}{ll}
d_3(  40s+ 10, 4)=   27s+ 5,&
d_3( 121s+ 11, 5)=   81s+ 5,\\
d_3( 121s+ 15, 5)=   81s+ 8,&
d_3( 364s+ 12, 6)=  243s+ 5,\\
d_3(1093s+ 13, 7)=  729s+ 5,&
d_3(3280s+ 14, 8)= 2187s+ 5.
\end{array}
\]
\end{enumerate}
\end{thm}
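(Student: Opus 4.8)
The plan is to handle all ten families by the same two-sided scheme: a lower bound coming from the simplex extension (Lemma~\ref{lem:extend}) applied to an explicit short base code, and an upper bound coming from the Griesmer bound, sharpened in part (ii) by the divisibility obstruction of Lemma~\ref{lem:Gb}~(ii). For the lower bounds I would write each length as $n=[k]_3\cdot s+n_0$, recalling that $[k]_3=(3^k-1)/2$ equals $40,121,364,1093,3280$ for $k=4,5,6,7,8$ and that $3^{k-1}$ equals $27,81,243,729,2187$. Here the base length is $n_0\in\{17,13,18,14\}$ in part (i) and $n_0\in\{10,11,15,12,13,14\}$ in part (ii). Each base code $C_{n_0,k,d_0}$ is an explicit ternary LCD code: for $(n_0,d_0)=(10,5)$ with $k=4$ it is the code cited from~\cite{AH-C}, and the others are exactly the codes with generator matrices $\bigl(I_k \mid M_{n_0,k,d_0}\bigr)$ recorded in Figure~\ref{Fig:E}. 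For $s=0$ these base codes already realize the asserted weights, and for $s\ge 1$ a single application of Lemma~\ref{lem:extend} with $s$ copies of $S_{3,k}$ adds $[k]_3\cdot s$ to the length and $3^{k-1}s$ to the minimum weight, producing a ternary LCD $[[k]_3\cdot s+n_0,\,k,\,3^{k-1}s+d_0]$ code. This matches the claimed value exactly, so $d_3(n,k)$ is at least the asserted weight in every case.

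For the upper bounds in part (i), I would observe that the asserted weight already coincides with the Griesmer value $g_3(n,k)$, so the Griesmer bound alone gives $d_3(n,k)\le g_3(n,k)=$ the asserted weight. A short computation of the ceiling sum $\sum_{i=0}^{k-1}\lceil d/3^i\rceil$ confirms, for each of the four families, that this sum equals the asserted weight's Griesmer bound and, crucially, that it falls strictly short of $n$. Thus the extremal code does not meet the Griesmer bound with equality, no divisibility obstruction is triggered, and the matching lower bound from the previous step settles part (i).

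For part (ii), I would instead show that the weight one above the target equals $g_3(n,k)$ and is attained with equality in the Griesmer bound. For each listed pair, computing $\sum_{i=0}^{k-1}\lceil (d_0{+}1){+}3^{k-1}s\rceil/3^i$ (spelled out as the relevant ceilings) gives exactly $n$. Since this weight is a multiple of $3$ in every case, any ternary LCD code attaining it would meet the Griesmer bound with $d\equiv 0\pmod 3$, which is forbidden by Lemma~\ref{lem:Gb}~(ii); hence $d_3(n,k)$ is at most the asserted weight, and the lower bound forces equality. The one family $d_3(40s+10,4)=27s+5$ is already covered by Proposition~\ref{prop:3-4-1}~(ii), since $40s+10\equiv 10\pmod{40}$, and can simply be cited.

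The argument is thus bookkeeping rather than a genuine obstacle, since all the difficult constructions are supplied in advance. The one point requiring care is the Griesmer computation in part (ii): I must verify that the ceiling sum for the weight immediately above the target hits $n$ \emph{exactly}, because it is precisely this equality that forces an extremal code to meet the Griesmer bound and thereby activate the self-orthogonality dichotomy behind Lemma~\ref{lem:Gb}~(ii). If any of these sums were strictly below $n$, the divisibility argument would fail. Checking these exact equalities, together with the companion strict inequalities that rule out still larger weights in both parts, is the only place where the specific arithmetic of the base lengths $n_0$ and the moduli $[k]_3$ genuinely enters.
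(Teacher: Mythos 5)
Your proposal is correct and follows essentially the same route as the paper: lower bounds from Lemma~\ref{lem:extend} applied to the explicit base codes (those of Figure~\ref{Fig:E} together with the $[10,4,5]$ code from~\cite{AH-C}), the Griesmer bound alone for the upper bounds in part (i), and the Griesmer bound sharpened by Lemma~\ref{lem:Gb}~(ii) for part (ii), where the weight one above the target meets the Griesmer bound with equality and is divisible by $3$. The arithmetic verifications you flag as the only delicate point do check out in every case.
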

\begin{proof}
\begin{enumerate}
\item
From Table~\ref{Tab:C}, 
there is a ternary LCD $[n,k,d]$ code
for
\[
(n,k,d) \in \{(13,6,6), (14,7,6), (17,5,9), (18,6,9)\}.
\]
By Lemma \ref{lem:extend},
there is a ternary LCD 
$[n+[k]_3\cdot s,k,d+3^{k-1}s]$ code for every positive integer $s$.
The assertion follows from the Griesmer bound.
     
\item 
Let $s$ be a positive integer.
By the Griesmer bound, we have
\[
\begin{array}{ll}
d_3(  40s+ 10, 4)\le   27s+ 6,&
d_3( 121s+ 11, 5)\le   81s+ 6,\\
d_3( 121s+ 15, 5)\le   81s+ 9,&
d_3( 364s+ 12, 6)\le  243s+ 6,\\
d_3(1093s+ 13, 7)\le  729s+ 6,&
d_3(3280s+ 14, 8)\le 2187s+ 6.
\end{array}
\]
For 
\begin{equation*}
(n,k,d) \in
\left\{
\begin{array}{ll}
(  40s+ 10, 4,   27s+ 6),&
( 121s+ 11, 5,   81s+ 6),\\
( 121s+ 15, 5,   81s+ 9),&
( 364s+ 12, 6,  243s+ 6),\\
(1093s+ 13, 7,  729s+ 6),&
(3280s+ 14, 8, 2187s+ 6)
\end{array}
\right\},
\end{equation*}
each ternary $[n,k,d]$ code meets the Griesmer bound.
Since $d$ is a multiple of $3$,
by Lemma~\ref{lem:Gb}~(ii), it is not LCD.
Hence, we have
\[
\begin{array}{ll}
d_3(  40s+ 10, 4)\le   27s+ 5,&
d_3( 121s+ 11, 5)\le   81s+ 5,\\
d_3( 121s+ 15, 5)\le   81s+ 8,&
d_3( 364s+ 12, 6)\le  243s+ 5,\\
d_3(1093s+ 13, 7)\le  729s+ 5,&
d_3(3280s+ 14, 8)\le 2187s+ 5.
\end{array}
\]
From~\cite[Table~4]{AH-C} and Table~\ref{Tab:C}, it is known that
\begin{multline*}
d_3(10, 4)= 5,
d_3(11, 5)= 5,
d_3(15, 5)= 8,\\
d_3(12, 6)= 5,
d_3(13, 7)= 5,
d_3(14, 8)= 5.
\end{multline*}
By Lemma~\ref{lem:extend},
there is a ternary LCD
$[n,k,d]$ code for
\[
(n,k,d) \in
\left\{
\begin{array}{ll}
(  40s+ 10, 4,   27s+ 5),&
( 121s+ 11, 5,   81s+ 5),\\
( 121s+ 15, 5,   81s+ 8),&
( 364s+ 12, 6,  243s+ 5),\\
(1093s+ 13, 7,  729s+ 5),&
(3280s+ 14, 8, 2187s+ 5)
\end{array}
\right\}.
\]
\end{enumerate}
This completes the proof.
\end{proof}

\begin{rem}
Only the parameters
$[ 11, 5, 5 ]$,
$[ 12, 6, 5 ]$,
$[ 13, 7, 5 ]$,
$[ 14, 8, 5 ]$,
$[ 14, 4, 8 ]$ and
$[ 15, 5, 8 ]$
are
parameters $[n,k,d_3(n,k)+1]$
meeting the Griesmer bound and
satisfying the
assumption of Lemma~\ref{lem:Gb}~(ii) for $n \le 19$ and $5 \le k \le n-5$.
\end{rem}

\begin{figure}[htbp]
\centering
{\small
\begin{align*}
\begin{array}{l}
M_{11,5,5}=\left(\begin{array}{c}
001111\\
012110\\
111100\\
121010\\
120101
\end{array}\right),
M_{12,6,5}=\left(\begin{array}{c}
001111\\
012110\\
111100\\
121010\\
120101\\
122112
\end{array}\right),
M_{13,6,6}=\left(\begin{array}{cccccccccccccccc}
0011111\\
0122110\\
1111100\\
1220101\\
1110011\\
1021012
\end{array}\right),
\\
M_{13,7,5}=\left(\begin{array}{c}
001111\\
012110\\
111100\\
121010\\
112001\\
120021\\
122112
\end{array}\right),
M_{14,7,6}=\left(\begin{array}{cccccccccccccccc}
0011111\\
0122110\\
1111100\\
1220101\\
1110011\\
1021012\\
1212112
\end{array}\right),
M_{14,8,5}=\left(\begin{array}{c}
001111\\
011122\\
012110\\
111100\\
121010\\
112001\\
120021\\
122112
\end{array}\right),
\\
M_{15,5,8}=
\left(\begin{array}{cccccccccccccccc}
 0 0 0 1 1 1 1 1 1 1\\
 0 1 1 2 2 1 1 1 0 0\\
 1 2 1 1 1 1 0 0 1 0\\
 1 2 2 2 0 1 2 0 0 1\\
 1 1 0 0 1 0 2 1 2 2
\end{array}\right),
M_{17,5,9}=
\left(\begin{array}{cccccccccccccccc}
000011111111\\
001122211100\\
112121110000\\
121111001010\\
122202011201
\end{array}\right),
\\
M_{18,6,9}=\left(\begin{array}{cccccccccccccccc}
000011111111\\
111101111122\\
122112110000\\
121210201100\\
120121202010\\
112011221212
\end{array}\right)
\end{array}
\end{align*}
 \caption{Matrices $M_{n,k,d}$}
\label{Fig:E}
}
\end{figure}

\section{Large dimensions}\label{sec:large}

In this section,
we examine the largest minimum weights $d_2(n,n-i)$ and 
$d_3(n,n-i)$ for small $i$.
In particular, we completely determine
$d_2(n,n-5)$ and $d_3(n,n-i)$ $(i\in\{2,3,4\})$ for arbitrary $n$.

\subsection{Binary LCD codes of dimension $n-5$}

The minimum weights $d_2(n,k)$ were determined for
$k=n-1$ in~\cite{DKOSS} and $k \in \{n-2,n-3,n-4\}$ in~\cite{AH}.

The following lemma is a key idea for the determination of 
$d_2(n,n-i)$ for small $i$.

\begin{lem}[{\cite[Theorem~3]{GKLRW}}]
\label{lem:F2-n-i}
Let $i$ be an integer with $2 \le i < n$.
If $n \ge 2^i$, then
$d_2(n,n-i) = 2$.
 \end{lem}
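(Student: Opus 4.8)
The plan is to prove the two inequalities $d_2(n,n-i)\le 2$ and $d_2(n,n-i)\ge 2$ separately, working throughout with a parity check matrix of the code in question. Recall that a binary $[n,n-i]$ code $C$ has a parity check matrix $H$, which is an $i\times n$ matrix of rank $i$, and that the minimum weight of $C$ equals the smallest number of columns of $H$ that are linearly dependent. Over $\FF_2$ this smallest number is $1$ exactly when some column of $H$ is the zero vector, it is $2$ exactly when no column is zero but two columns coincide, and it is at least $3$ exactly when the columns of $H$ are nonzero and pairwise distinct. I will use the Massey characterization in the form that $C$ is LCD if and only if $HH^T$ is nonsingular, since $C$ is LCD if and only if $C^\perp$ is LCD, and $H$ is a generator matrix of $C^\perp$.

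For the upper bound, suppose $C$ is any binary $[n,n-i]$ code with minimum weight at least $3$. Then the $n$ columns of $H$ are nonzero and pairwise distinct, hence are $n$ distinct elements of $\FF_2^i\setminus\{\0\}$. This forces $n\le 2^i-1$, contradicting $n\ge 2^i$. Therefore every binary $[n,n-i]$ code with $n\ge 2^i$ has minimum weight at most $2$, and in particular $d_2(n,n-i)\le 2$. I note that this step does not use the LCD hypothesis at all; it is the genuine obstruction that caps the minimum weight.

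For the lower bound I will exhibit an explicit LCD $[n,n-i,2]$ code by constructing its parity check matrix in the form $H=(I_i \mid B)$, where $B$ is an $i\times(n-i)$ matrix all of whose columns are nonzero, chosen so that $HH^T=I_i+BB^T$ is nonsingular. Taking every column of $B$ equal to the all-one vector $\1_i$ gives $BB^T=(n-i)J$, where $J$ is the all-one $i\times i$ matrix; when $n-i$ is even this is the zero matrix, so $HH^T=I_i$ is nonsingular. When $n-i$ is odd I will instead take $n-i-1$ columns of $B$ equal to $\1_i$ and one column equal to a fixed nonzero vector $w\in\FF_2^i$ of even weight (such a $w$ exists since $i\ge 2$); then $BB^T=ww^T$, and a one-line rank-one computation shows that $I_i+ww^T$ is nonsingular precisely because $\wt(w)$ is even, so $w^Tw=0$ over $\FF_2$. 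In either case $H$ has no zero column, so the associated code has minimum weight at least $2$, and $HH^T$ is nonsingular, so the code is LCD; combined with the upper bound its minimum weight is exactly $2$. Since $n\ge 2^i$ and $i\ge 2$ guarantee $n-i\ge 2$, the matrix $B$ always has enough columns for this construction.

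The one point requiring care is the lower-bound construction: one must simultaneously keep all columns of $H$ nonzero, to force minimum weight $2$ rather than $1$, and keep $HH^T$ nonsingular, to force the LCD property, and these two demands pull in opposite directions because the self-orthogonal choices that make $HH^T$ vanish are also the ones that are easiest to arrange. Splitting according to the parity of $n-i$ resolves this cleanly, with the even-weight vector $w$ supplying the needed rank-one correction in the odd case. Everything else is a routine verification over $\FF_2$; in particular no appeal to Lemma~\ref{lem:extend} is needed, although one could alternatively assemble a full set of base codes and propagate them by appending copies of $S_{2,i}$.
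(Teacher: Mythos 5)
Your proof is correct. Note that the paper does not prove this lemma at all: it is imported verbatim from \cite[Theorem~3]{GKLRW}, so there is no internal proof to compare against. Both halves of your argument check out. The upper bound is the standard column-counting observation (a parity check matrix of an $[n,n-i,\ge 3]$ binary code needs $n$ distinct nonzero columns in $\FF_2^i$, forcing $n\le 2^i-1$), and it indeed uses nothing about LCD-ness. For the lower bound, your matrix $H=(I_i\mid B)$ has rank $i$ and no zero column, so the code is an $[n,n-i]$ code of minimum weight exactly $2$ once the upper bound is in hand; the computation $HH^T=I_i+BB^T$ with $BB^T=\0$ in the even case and $BB^T=ww^T$ in the odd case is right, and $\det(I_i+ww^T)=1+w^Tw=1$ over $\FF_2$ since $w$ has even weight, so the LCD property follows from Massey's criterion applied to $C^\perp$. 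The hypotheses $i\ge 2$ and $n\ge 2^i$ do guarantee $n-i\ge 2$ and the existence of a nonzero even-weight $w$, so there are no degenerate cases. It is worth pointing out that your construction closely parallels what the authors themselves do for the ternary analogue in Lemma~\ref{lem:n-i}, where they also exhibit an explicit parity check matrix of the form $(I_i\mid\ast)$, verify that $HH^T$ is nonsingular, and obtain the upper bound from the sphere-packing bound; your binary argument is the natural counterpart and could serve as a self-contained replacement for the external citation.
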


\begin{prop}\label{prop:F2-n-5}
For $n \ge 6$, 
\[
d_2(n,n-5)=
\begin{cases}
5& \text{ if } n=6, \\
4& \text{ if } n \in \{7,9,11\},\\
3& \text{ if } n \in \{8,10,12,13,\ldots,26\},\\
2& \text{ if } n \in \{27,28,\ldots\}.\\
\end{cases}
\]
\end{prop}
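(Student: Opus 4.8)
The plan is to pass to the dual code. Since $C$ is an LCD $[n,n-5]$ code if and only if $C^\perp$ is an LCD $[n,5]$ code, and the minimum weight of $C$ equals the smallest number of linearly dependent columns of a parity-check matrix $H$ of $C$ (that is, of a $5\times n$ generator matrix of $C^\perp$), everything reduces to the combinatorics of the columns of $H$ over $\FF_2$: the minimum weight of $C$ is at least $2$ iff no column of $H$ equals $\0_5$, at least $3$ iff the columns of $H$ are distinct and nonzero, and at least $4$ iff, in addition, no three columns sum to $\0_5$. Throughout, $C$ is LCD iff $HH^T$ is nonsingular. I would treat the ranges $n\ge 27$, $17\le n\le 26$, and $6\le n\le 24$ separately, with $n\in\{25,26\}$ the genuinely new small cases.

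For $n\ge 27$ the target is $d_2(n,n-5)=2$. The case $n\ge 32$ is immediate from Lemma~\ref{lem:F2-n-i}. For $27\le n\le 31$, the key point is that minimum weight $\ge 3$ forces the columns of $H$ to be $n$ distinct nonzero vectors of $\FF_2^5$; since the $31$ nonzero vectors of $\FF_2^5$ are exactly the columns of $S_{2,5}$, after permuting columns I may take $H$ to consist of all columns of $S_{2,5}$ except those in a set $R$ with $|R|=31-n\le 4$. Writing $h_i$ for the $i$-th column of $S_{2,5}$ and using that the simplex code is self-orthogonal, so $S_{2,5}S_{2,5}^T=O$, I obtain over $\FF_2$
\[
HH^T=\sum_{i\notin R}h_i h_i^T=S_{2,5}S_{2,5}^T+\sum_{i\in R}h_i h_i^T=\sum_{i\in R}h_i h_i^T,
\]
a sum of at most $4$ rank-one matrices, so $\rank(HH^T)\le 31-n\le 4<5$ and $HH^T$ is singular. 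Hence no such $C$ is LCD, giving $d_2(n,n-5)\le 2$. The lower bound $d_2(n,n-5)\ge 2$ follows by exhibiting an explicit LCD code, e.g. $H=\left(\begin{array}{cc}I_5&M\end{array}\right)$ where $M$ is a block of repeated copies of a fixed nonzero vector (together with one even-weight column when $n$ is even), chosen so that $HH^T$ is nonsingular; no column is $\0_5$ and some column repeats, so the minimum weight is exactly $2$.

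For $17\le n\le 26$ I would use the classical fact that the longest binary code of redundancy $5$ and minimum distance $4$ is the extended Hamming $[16,11,4]$ code; equivalently, one cannot choose more than $16$ distinct nonzero vectors of $\FF_2^5$ with no three summing to $\0_5$. Thus for $n>16$ no $[n,n-5,4]$ code exists and $d_2(n,n-5)\le 3$. For the matching lower bound (in particular for the new cases $n=25,26$) I would again delete a set $R$ of $31-n\ge 5$ columns of $S_{2,5}$, now choosing $R$ so that $\sum_{i\in R}h_i h_i^T$ is nonsingular: for $n=26$ take $R$ to be the five weight-one vectors, giving $HH^T=I_5$, and for $n=25$ adjoin one further even-weight vector $v$, so that $HH^T=I_5+vv^T$ remains nonsingular. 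The resulting $C$ has distinct nonzero columns and is LCD, hence has minimum weight exactly $3$.

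The remaining values for $6\le n\le 24$ — namely $5$ at $n=6$, $4$ at $n\in\{7,9,11\}$, and $3$ otherwise — are read off from the known determinations of $d_2(n,k)$ for $n\le 24$ in~\cite{AH}, \cite{GKLRW} and~\cite{HS}. The main obstacle is precisely this short range $n\le 16$: there, sets of $n$ distinct nonzero vectors with no three summing to $\0_5$ do exist, so minimum weight $4$ (and $5$ at $n=6$) is combinatorially possible, and the exceptional values at $n=6,7,9,11$ arise from a genuine interaction between this condition and the LCD constraint rather than from a clean rank count. Since that range is already settled in the literature it reduces to a table lookup, and the self-contained new content is the rank argument for $n\ge 27$ together with the extended-Hamming bound and the explicit constructions at $n=25,26$.
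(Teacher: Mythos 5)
Your proof is correct, and for the genuinely new cases $n\ge 25$ it takes a different route from the paper. The paper treats $n\le 24$ by table lookup and $n\ge 32$ by Lemma~\ref{lem:F2-n-i}, exactly as you do, but it handles $n\in\{25,26\}$ by exhibiting explicit computer-verified parity-check matrices $H_{25},H_{26}$ and handles $27\le n\le 31$ by an exhaustive computer search (via Method~I) showing that no binary LCD $[n,n-5,3]$ code exists. Your replacement for that search is a clean, computation-free argument: minimum weight $\ge 3$ forces the $5\times n$ parity-check matrix $H$ to consist of all but $|R|=31-n\le 4$ columns of $S_{2,5}$, and since the simplex code is self-orthogonal, $HH^T=\sum_{i\in R}h_ih_i^T$ has rank at most $4<5$, so no such code is LCD; it also explains conceptually why the value drops to $2$ precisely at $n=27$. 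Likewise your choices of $R$ for $n=26$ (the five weight-one columns, giving $HH^T=I_5$) and $n=25$ (one further even-weight column $v$, giving $HH^T=I_5+vv^T$) replace the paper's explicit matrices, and the upper bound $d_2(n,n-5)\le 3$ for $17\le n\le 26$ via the classical length bound $n\le 2^{r-1}$ for distance-$4$ codes of redundancy $r$ is sound (it is only needed for $n=25,26$, the rest being in the cited tables). Two points worth making explicit in a final write-up, neither of which is a gap: the existence of an LCD $[n,n-5,2]$ code for $27\le n\le 31$ (your $\left(\begin{array}{cc}I_5&M\end{array}\right)$ construction with $M$ consisting of copies of a fixed even-weight vector gives $HH^T=I_5$ or $I_5+vv^T$ in all cases, so the parity discussion can be dropped), and that your $n=25,26$ codes have minimum weight exactly $3$ (immediate, since $n>16$ already excludes distance $4$). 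In short, the paper's proof is uniform but leans on nontrivial computation where yours is self-contained and elementary; both agree on the table-lookup portion for small $n$.
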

\begin{proof}
If $n \ge 32$, then $d_2(n,n-5)=2$ by Lemma~\ref{lem:F2-n-i}.
It is known that
$d_2(n,n-5)=3$ for $n\in\{8,10,12,13,\ldots,24\}$,
$d_2(n,n-5)=4$ for $n\in \{7,9,11\}$ and
 $d_2(6,1)=5$~\cite[Table~15]{AH}, \cite[Table~1]{GKLRW}
 and~\cite[Table~3]{HS}.
Let $C_{26}$ be the binary $[26,21]$ code with parity-check matrix
$H_{26}=
\left(
\begin{array}{ccccc}
I_5&  M_{26}  \\
\end{array}
\right)
$, where
\[
M_{26}=
\begin{pmatrix}
0 0 0 0 0 0 0 1 1 1 1 1 1 1 1 1 1 1 1 1 1\\
0 1 1 1 1 1 1 0 0 0 0 0 0 1 1 1 1 1 1 1 1\\
1 0 0 1 1 1 1 0 0 1 1 1 1 0 0 0 0 1 1 1 1\\
1 1 1 0 0 1 1 1 1 0 0 1 1 0 0 1 1 0 0 1 1\\
0 0 1 0 1 0 1 0 1 0 1 0 1 0 1 0 1 0 1 0 1
 \end{pmatrix}.
\]
Let $H_{25}$ be the matrix obtained from $H_{26}$ by deleting
the last column.
Let $C_{25}$ be the binary $[25,20]$ code with parity-check matrix $H_{25}$.
We verified that $C_{25}$ and $C_{26}$ are binary LCD codes with
parameters $[25,20,3]$ and $[26,21,3]$, respectively.
Our exhaustive computer search shows that there is no binary
$[n,n-5,3]$ code for $n \in \{27,28,\ldots,31\}$.
This was done by the method which is obtained by applying
Method~I in Section~\ref{sec:d}.
\end{proof}

\subsection{Ternary LCD codes of dimensions $n-2,n-3,n-4$}

The classification of ternary LCD $[n,n-1]$ codes was done in~\cite{AH-C}.
In this subsection, we determine the largest minimum weights
$d_3(n,n-i)$ for arbitrary $n$ and $i \in \{2,3,4\}$.

The following lemma is a key idea for the determination of 
$d_3(n,n-i)$ for small $i$.

\begin{lem}\label{lem:n-i}
Suppose that $i$ is an integer with $2 \le i \le n-1$.
\begin{enumerate}
\item 
There is a ternary LCD $[n,n-i,2]$ code.
\item 
If $n > \frac{3^i-1}{2}$, then $d_3(n,n-i) = 2$.
\end{enumerate}
\end{lem}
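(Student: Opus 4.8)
The plan is to prove the two parts separately; part (ii) reduces quickly to part (i) plus a counting argument, while part (i) carries the real content. For the upper bound in (ii) I would work with a parity-check matrix and observe that the bound in fact holds for every ternary $[n,n-i]$ code, not only the LCD ones. Let $H$ be an $i\times n$ parity-check matrix of such a code. Its minimum weight is at least $3$ exactly when any two columns of $H$ are linearly independent, that is, when the columns are nonzero and pairwise non-proportional. Over $\FF_3$ the $3^i-1$ nonzero vectors of $\FF_3^i$ fall into $\frac{3^i-1}{2}$ proportionality classes $\{v,2v\}$, so at most $\frac{3^i-1}{2}$ columns can be pairwise non-proportional. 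Hence minimum weight $\ge 3$ forces $n\le\frac{3^i-1}{2}$, and contrapositively $n>\frac{3^i-1}{2}$ gives $d_3(n,n-i)\le 2$. For the matching lower bound I would invoke part (i): since $\frac{3^i-1}{2}\ge i$, the hypothesis $n>\frac{3^i-1}{2}$ yields $i\le n-1$, so (i) produces an LCD $[n,n-i,2]$ code and $d_3(n,n-i)\ge 2$; combining the two bounds gives equality.

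For part (i) I would construct the code through its dual. By Massey's characterization, $C$ is LCD if and only if $C^{\perp}$ is, and the latter is equivalent to $HH^{T}$ being nonsingular for a generator matrix $H$ of $C^{\perp}$ (equivalently, a parity-check matrix of $C$). I would take the $i\times n$ matrix $H$ whose first $i$ columns are the standard basis vectors $e_1,\dots,e_i$ of $\FF_3^i$ — this forces $\rank H=i$, hence $\dim C=n-i$ — and whose remaining $n-i\ (\ge 1)$ columns are chosen from $\{e_1,e_2\}$. Such an $H$ has no zero column and contains a repeated column, so $C$ has minimum weight exactly $2$; and since every column is a standard basis vector, $HH^{T}$ is diagonal with an easily computed determinant. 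It then remains only to choose the repeated columns so that this determinant is nonzero modulo $3$.

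This last point is the one genuine obstacle, and it is a congruence condition rather than a conceptual one. Taking all $n-i$ extra columns equal to $e_1$ gives $HH^{T}=\operatorname{diag}(1+(n-i),1,\dots,1)$, which is nonsingular precisely when $n-i\not\equiv 2\pmod 3$. When $n-i\equiv 2\pmod 3$ this naive choice fails, so I would instead use $n-i-1$ copies of $e_1$ together with one copy of $e_2$, obtaining $HH^{T}=\operatorname{diag}(n-i,2,1,\dots,1)$ with determinant $2(n-i)\not\equiv 0\pmod 3$. In both cases a standard basis column is repeated (note that $n-i\equiv 2\pmod 3$ forces $n-i\ge 2$, and the perturbation needs $i\ge 2$ so that $e_2$ exists, which is part of the hypothesis), so $C$ has minimum weight $2$ and is LCD. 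I expect the only care required to be the bookkeeping of these two congruence cases; everything else is immediate.
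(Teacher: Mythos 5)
Your proof is correct and follows essentially the same route as the paper: an explicit $i\times n$ parity-check matrix of the form $(I_i \mid \ast)$ with a repeated column, shown to be LCD by computing $HH^{T}$, combined for part (ii) with the sphere-packing (equivalently, column-counting) bound showing $d\ge 3$ forces $n\le\frac{3^i-1}{2}$. The only difference is that the paper avoids your congruence case split by appending two copies of $e_1+e_2$ together with $n-i-2$ copies of $e_1$, so that $HH^{T}$ contains the block $\left(\begin{smallmatrix} n-i+1 & 2\\ 2 & 0\end{smallmatrix}\right)$ of determinant $-4\equiv 2\pmod 3$, nonsingular for every $n$.
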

\begin{proof}
Let $C$ be a ternary $[n,n-i]$ code with parity-check matrix $H$, where
\[
H=
\begin{cases}
\left(
\begin{array}{cccccccccc}
1 &0       & 1&1&1& \cdots & 1 \\
0 &1       & 1&1&0& \cdots & 0 \\
\end{array}
\right) & \text{ if } i = 2, \\
\left(
\begin{array}{cccccccccc}
 &       & 1&1&1& \cdots & 1 \\
 & I_i   & 1&1&0& \cdots & 0 \\
 &       & \0_{i-2}^T&\0_{i-2}^T&\0_{i-2}^T& \cdots & \0_{i-2}^T \\
\end{array}
\right) & \text{ if } i \in \{3,4,\ldots,n-1\}.
\end{cases}
\]
Since 
\[
HH^T=
\begin{cases}
\left(
\begin{array}{cccccccccc}
n-1 & 2 \\
2     & 0 
\end{array}
\right) & \text{ if } i=2, \\
\left(
\begin{array}{cccccccccc}
n-i+1 & 2 &  & \0_{i-2} &  \\
2     & 0 &  & \0_{i-2} &  \\
\0_{i-2}^T&\0_{i-2}^T& & I_{i-2} & \\
\end{array}
\right) & \text{ if } i \in \{3,4,\ldots,n-1\},
\end{cases}
\]
the code $C$ is LCD.
By the construction, it is trivial that $C$ has minimum weight $2$.
This proves the assertion (i).
 
Suppose that there is a ternary $[n,n-i,d]$ code.
By the sphere-packing bound, if $d \ge 3$, then
$n \le \frac{3^i-1}{2}$.
This proves the assertion (ii).
\end{proof}


\begin{prop}\label{prop:n-2}
For $n \ge 3$,
\[
d_3(n,n-2)=2.
\]
\end{prop}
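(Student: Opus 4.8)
The plan is to determine $d_3(n,n-2)$ for all $n \ge 3$ by combining the general machinery of Lemma~\ref{lem:n-i} with a direct check of the finitely many small cases it leaves open. For $i=2$, Lemma~\ref{lem:n-i}~(i) already supplies a ternary LCD $[n,n-2,2]$ code for every $n \ge 3$, so the lower bound $d_3(n,n-2) \ge 2$ holds unconditionally. It remains to prove the matching upper bound $d_3(n,n-2) \le 2$, i.e.\ that no ternary LCD $[n,n-2,d]$ code with $d \ge 3$ exists.

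First I would apply Lemma~\ref{lem:n-i}~(ii) with $i=2$: the threshold is $n > \frac{3^2-1}{2} = 4$, so for every $n \ge 5$ we immediately get $d_3(n,n-2) = 2$ from the sphere-packing bound, with no LCD hypothesis even needed. This disposes of all but the two cases $n = 3$ and $n = 4$, which I would handle by inspection. For $n=3$ the code is a ternary LCD $[3,1]$ code; for $n=4$ it is a ternary LCD $[4,2]$ code. In each case one checks that the largest minimum weight achievable by a \emph{linear} $[n,n-2]$ code is already constrained, and that any code with $d \ge 3$ fails to be LCD (or fails to exist), so that $d_3(n,n-2) = 2$ persists. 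Since the classification of ternary LCD $[n,k]$ codes for $n \le 10$ was carried out in~\cite{AH-C}, the values $d_3(3,1)$ and $d_3(4,2)$ can simply be quoted from that classification rather than recomputed.

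Assembling these pieces, the lower bound from Lemma~\ref{lem:n-i}~(i) meets the upper bound from Lemma~\ref{lem:n-i}~(ii) for $n \ge 5$, and the two residual small lengths $n \in \{3,4\}$ are settled by the known classification, giving $d_3(n,n-2) = 2$ for all $n \ge 3$. The main obstacle, such as it is, is purely bookkeeping: verifying that the sphere-packing threshold $\frac{3^i-1}{2}$ leaves exactly the cases $n \in \{3,4\}$ uncovered and that those two cases genuinely yield the value $2$ rather than something larger. There is no delicate argument here, since the LCD condition only tightens the bound and the nontrivial direction (the upper bound) follows from a counting argument that is independent of the LCD property for $n \ge 5$.
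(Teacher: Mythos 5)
Your proposal is correct and follows essentially the same route as the paper: Lemma~\ref{lem:n-i}~(ii) with $i=2$ settles all $n\ge 5$, and the two remaining cases $n\in\{3,4\}$ are read off from the classification in~\cite{AH-C} (the paper cites \cite[Proposition~5]{AH-C} for $d_3(3,1)=2$ and \cite[Table~4]{AH-C} for $d_3(4,2)=2$). The only cosmetic difference is that you invoke Lemma~\ref{lem:n-i}~(i) explicitly for the lower bound, which the paper leaves implicit since part~(ii) already asserts equality.
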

\begin{proof}
By~\cite[Proposition~5]{AH-C}, $d_3(3,1)=2$.
It is known that $d_3(4,2)=2$~\cite[Table~4]{AH-C}.
If $n \ge 5$, then $d_3(n,n-2)=2$ by Lemma~\ref{lem:n-i}~(ii).
\end{proof}

\begin{prop}\label{prop:n-3}
For $n \ge 4$,
\[
d_3(n,n-3)=
\begin{cases}
4& \text{ if } n=4, \\
3& \text{ if } n \in \{5,6,7,8,9,10\}, \\
2& \text{ if } n \in \{11,12,\ldots\}.\\
\end{cases}
\]
\end{prop}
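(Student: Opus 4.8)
The plan is to pin down $d_3(n,n-3)$ by combining one existence direction with two separate upper-bound arguments: a combinatorial one ruling out minimum weight $4$ for all $n\ge 5$, and a sharper, LCD-specific one ruling out minimum weight $3$ for all $n\ge 11$. For the lower bounds, Lemma~\ref{lem:n-i}~(i) already supplies a ternary LCD $[n,n-3,2]$ code for every $n\ge 4$, which gives $d_3(n,n-3)\ge 2$ and settles the whole range $n\ge 11$ once the matching upper bound is proved. For $n=4$ the repetition code $[4,1,4]$ has generator matrix $G=\1_4$ with $GG^T=4\equiv 1\pmod 3$ nonsingular, so it is LCD and $d_3(4,1)\ge 4$. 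For $5\le n\le 10$ I would exhibit LCD $[n,n-3,3]$ codes directly (these parameters also lie in the classified range $n\le 10$ of~\cite{AH-C}, from which the values can simply be read off); a uniform construction is described below.

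For the upper bound $d_3(n,n-3)\le 3$ when $n\ge 5$, note that a ternary $[n,n-3,4]$ code is MDS, since the Singleton bound gives $d\le n-(n-3)+1=4$. Its dual would then be a ternary $[n,3,n-2]$ MDS code, equivalently an $n$-arc in $PG(2,3)$. As a plane $3$-ary arc has at most $q+1=4$ points, no such code exists for $n\ge 5$. Phrased directly, a weight-$4$ code requires every three columns of a parity-check matrix to be linearly independent, which is impossible with more than four columns over $\FF_3$; hence $d_3(n,n-3)\le 3$ for all $n\ge 5$, with no LCD hypothesis needed.

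The key step is $d_3(n,n-3)\le 2$ for $n\ge 11$. Let $C$ be a ternary $[n,n-3,d]$ code with $d\ge 3$ and let $H$ be a parity-check matrix; then $H$ generates $C^\perp$, and $C$ is LCD if and only if $HH^T$ is nonsingular. Since $d\ge 3$, the $n$ columns of $H$ represent $n$ distinct $1$-dimensional subspaces of $\FF_3^3$, of which there are only $[3]_3=13$. The rank-one contribution $h_jh_j^T$ depends only on the subspace of $h_j$, because $(\lambda h)(\lambda h)^T=\lambda^2 hh^T$ and $\lambda^2=1$ for $\lambda\in\{1,2\}$. I would prove the identity $\sum_{i=1}^{13}v_iv_i^T=\0$ over $\FF_3$, where $v_1,\dots,v_{13}$ are representatives of all $13$ subspaces: summing $uu^T$ over all $u\in\FF_3^3$ gives off-diagonal entries $3(\sum_a a)(\sum_b b)\equiv 0$ and diagonal entries $9\sum_a a^2\equiv 0 \pmod 3$, while each subspace is counted with multiplicity $2$, so $2\sum_i v_iv_i^T=\0$ and hence $\sum_i v_iv_i^T=\0$. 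Writing $S$ for the chosen $n$ subspaces, it follows that $HH^T=\sum_{i\in S}v_iv_i^T=-\sum_{i\notin S}v_iv_i^T$, a sum of $13-n$ rank-one matrices, so $\rank(HH^T)\le 13-n$. For $n\ge 11$ this forces $\rank(HH^T)\le 2<3$, so $HH^T$ is singular and $C$ is not LCD; thus no ternary LCD $[n,n-3,3]$ code exists and $d_3(n,n-3)\le 2$, which with Lemma~\ref{lem:n-i}~(i) yields equality. The same identity powers the lower-bound constructions for $5\le n\le 10$: choosing the complementary set $\{v_i:i\notin S\}$ of size $13-n\ge 3$ so that $\sum_{i\notin S}v_iv_i^T$ is nonsingular (for instance the three coordinate axes, whose contributions sum to $I_3$, together with further subspaces chosen to keep the sum nonsingular) makes $HH^T=-\sum_{i\notin S}v_iv_i^T$ nonsingular, while the $n$ distinct subspaces in $S$ guarantee $d\ge 3$.

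The main obstacle I anticipate is organizing the $n\ge 11$ argument cleanly, namely verifying $\sum_i v_iv_i^T=\0$ and tracking that the rank bound $13-n$ is exactly what makes $n=11$ the threshold; the arc bound for $d\le 3$ and the small-$n$ constructions are then routine finite checks, and for $n\le 10$ the values may instead be quoted from the classification in~\cite{AH-C}. I would also note in passing that this rank identity subsumes Lemma~\ref{lem:n-i}~(ii) here, since it handles the full range $n\ge 11$ rather than only $n>13$.
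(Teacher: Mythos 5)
Your proposal is correct, and for the decisive range $n\ge 11$ it takes a genuinely different route from the paper. The paper handles $n\ge 14$ by the sphere-packing bound (Lemma~\ref{lem:n-i}~(ii)) and then treats $n\in\{11,12,13\}$ by quoting Brouwer's tables for $d_3^{\mathrm{all}}(n,n-3)=3$ and running an exhaustive computer search (Method~I) to verify that none of the ternary $[n,n-3,3]$ codes is LCD. You replace both steps by a single structural argument: since $d\ge 3$ forces the columns of a parity-check matrix $H$ to be $n$ distinct points of $PG(2,3)$, and since $\sum_{i=1}^{13}v_iv_i^T=\0$ over $\FF_3$ (the verification via $\sum_{u\in\FF_3^3}uu^T=\0$ and the multiplicity $2=\lambda^2$-invariance is correct), one gets $HH^T=-\sum_{i\notin S}v_iv_i^T$ and hence $\rank(HH^T)\le 13-n\le 2<3$, so $HH^T$ is singular and $C$ is not LCD. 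This is a clean, computer-free proof that makes the threshold $n=11$ transparent, and it also yields the nonexistence of codes with $d\ge3$ for $n>13$ as a byproduct; the paper's approach buys nothing here beyond being an instance of its general-purpose search machinery (which it reuses, e.g., for $d_3(n,n-4)$). Your arc/MDS argument for $d_3^{\mathrm{all}}(n,n-3)\le 3$ when $n\ge 5$ is also correct and self-contained, whereas the paper simply cites the classification of~\cite{AH-C} for $5\le n\le 10$. The only loose end is the ``uniform construction'' of LCD $[n,n-3,3]$ codes for $5\le n\le 10$: you assert without verification that a complement of size $13-n$ with nonsingular rank-one sum can always be chosen (note, e.g., that $I_3+vv^T$ is singular exactly when $v^Tv=-1$, so the extra points must be picked with some care), but since you also offer the fallback of reading these values off from~\cite{AH-C} --- exactly what the paper does --- this does not affect correctness.
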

\begin{proof}
By~\cite[Proposition~5]{AH-C}, $d_3(4,1)=4$.
It is known that
$d_3(n,n-3)=3$ for $n \in \{5,6,7,8,9,10\}$~\cite[Table~4]{AH-C}.
If $n \ge 14$, then $d_3(n,n-3)=2$ by Lemma~\ref{lem:n-i}~(ii).
It is known that 
$d^{\text{all}}_3(n,n-3)=3$ if $n \in \{11,12,13\}$ (see~\cite{Br}).
Our exhaustive computer search shows
that no ternary $[n,n-3,3]$ code is LCD for $n \in \{11,12,13\}$,
by using Method~I in Section~\ref{sec:d}.
By Lemma~\ref{lem:n-i}~(i),
there is a ternary LCD $[n,n-3,2]$ code for $n
 \in \{11,12,13\}$.
The result follows.
\end{proof}

\begin{prop}\label{prop:n-4}
For $n \ge 5$,
\[
d_3(n,n-4)=
\begin{cases}
5& \text{ if } n=5, \\
4& \text{ if } n \in \{6,7,8\}, \\
3& \text{ if } n \in \{9,10,\ldots,36\}, \\
2& \text{ if } n \in \{37,38,\ldots\}.\\
\end{cases}
\]
\end{prop}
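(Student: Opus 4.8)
The plan is to follow the template of Propositions~\ref{prop:n-2} and~\ref{prop:n-3}: first pin down the largest minimum weight $d_3^{\text{all}}(n,n-4)$ among \emph{all} ternary $[n,n-4]$ codes, and then decide, for each length, whether that optimal weight (or the next one down) is attained by an LCD code. I would begin by reading the small lengths off the classification in~\cite[Table~4 and Proposition~5]{AH-C}: this gives $d_3(5,1)=5$ (the repetition code, which is LCD since $5\not\equiv 0\pmod 3$), $d_3(n,n-4)=4$ for $n\in\{6,7,8\}$, and $d_3(n,n-4)=3$ for $n\in\{9,10\}$. At the opposite extreme, Lemma~\ref{lem:n-i}~(ii) with $i=4$, together with $\frac{3^4-1}{2}=40$, yields $d_3(n,n-4)=2$ for every $n\ge 41$. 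Thus only the range $11\le n\le 40$ requires new work.

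The key structural input for this range is the geometry of a $4\times n$ parity-check matrix $H$ over $\FF_3$: its columns may be viewed as points of the projective space $PG(3,3)$, which has exactly $[4]_3=40$ points. A ternary $[n,n-4,d]$ code has $d\ge 3$ iff the columns of $H$ are pairwise independent (distinct projective points), and $d\ge 4$ iff no three columns are dependent, i.e.\ the columns form a \emph{cap}. Since the maximum cap in $PG(3,3)$ has $10$ points (the elliptic quadric; this can also be read off Brouwer's tables~\cite{Br}), there is no ternary $[n,n-4,4]$ code for $n\ge 11$, whereas $n$ distinct projective points are available for every $n\le 40$. Hence $d_3^{\text{all}}(n,n-4)=3$ for $11\le n\le 40$, and in particular $d_3(n,n-4)\le 3$ throughout this range.

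It remains to split the range according to whether weight $3$ is attained by an LCD code. For the lengths $n\in\{11,12,\ldots,36\}$ I would produce an LCD $[n,n-4,3]$ code within the framework of Remark~\ref{rem}: such a code is exactly the dual of an LCD code $C_{3,4}(m)$ with $m\in\{0,1\}^{40}$ (distinct columns) and $\sum_i m_i=n$, so a computer search over such $m$ that only needs to exhibit one code per length, testing nonsingularity of $G_{3,4}(m)G_{3,4}(m)^T$, gives $d_3(n,n-4)=3$ there. For the four top lengths $n\in\{37,38,39,40\}$ the task is instead to show that \emph{no} $[n,n-4,3]$ code is LCD; then $d_3(n,n-4)=2$ follows from Lemma~\ref{lem:n-i}~(i). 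Here the enumeration collapses, since such a code is determined by the set of $40-n\in\{0,1,2,3\}$ omitted points and $PGL(4,3)$ is transitive on points and on pairs of points, leaving only a handful of equivalence classes to test. The case $n=40$ needs no computation at all: the unique $[40,36,3]$ code is the ternary Hamming code, the dual of the simplex code $S_{3,4}$, which is self-orthogonal because $4\ge k_0=2$ and hence not LCD.

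The main obstacle is the nonexistence half of the argument. Exhibiting one LCD code of the optimal weight (the range $11\le n\le 36$) is routine, but \emph{ruling out} LCD codes of weight $3$ for $n\in\{37,38,39,40\}$ forces one to enumerate the relevant codes up to equivalence rather than produce a single witness. What saves the computation is precisely the cap bound: for $n$ near $40$ a $[n,n-4,3]$ code is specified by only a few omitted projective points, so after reducing by $PGL(4,3)$ the search is small, and the self-orthogonality of $S_{3,4}$ disposes of $n=40$ outright.
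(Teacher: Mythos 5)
Your proposal is correct and shares the paper's overall skeleton (small lengths from the classification in~\cite{AH-C}, the sphere-packing bound for large $n$, explicit LCD $[n,n-4,3]$ codes for the middle range, a nonexistence check near $n=40$, and Lemma~\ref{lem:n-i}~(i) to fall back to weight $2$), but the two halves that carry the real content are handled by genuinely different means. For the upper bound $d_3(n,n-4)\le 3$ on $11\le n\le 40$ the paper simply cites Brouwer's tables, whereas you derive it from the cap bound in $PG(3,3)$ (maximum cap size $10$); both are fine. The more substantial divergence is the nonexistence part: the paper runs an exhaustive Method-I search over all ternary $[n,n-4,3]$ codes for $n\in\{37,38,39\}$ and disposes of $n\ge 40$ by Lemma~\ref{lem:n-i}~(ii), while you observe that any such code is the complement of a set of $40-n$ points of $PG(3,3)$ in a parity-check matrix, so that the $2$-transitivity of $PGL(4,3)$ (and the two orbits on triples, collinear or not) reduces the whole verification to a handful of representatives whose Gram matrices one tests by hand; this is a much smaller and more transparent computation, and it is legitimate because over $\FF_3$ monomial transformations preserve $GG^T$. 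Your explicit treatment of $n=40$ (the unique $[40,36,3]$ code is the Hamming code, whose dual is the self-orthogonal simplex code, so it is not LCD) is in fact needed: Lemma~\ref{lem:n-i}~(ii) as stated requires $n>\frac{3^4-1}{2}=40$, so the paper's appeal to it at $n=40$ is slightly imprecise and your argument patches that boundary case. For the constructive range $11\le n\le 36$ you search for witnesses among duals of codes $C_{3,4}(m)$ with $m\in\{0,1\}^{40}$ rather than exhibiting the paper's nested matrices $M_{i+4}$; this is the same computation in different clothing (one should just also confirm that the chosen columns span $\FF_3^4$ so the code really has dimension $n-4$). Both routes still rest on finite machine checks, but yours are considerably smaller.
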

\begin{proof}
By~\cite[Proposition~5]{AH-C}, $d_3(5,1)=5$.
It is known that
$d_3(n,n-4)=4$ for $n \in \{6,7,8\}$ and
$d_3(n,n-4)=3$ for $n \in \{9,10\}$~\cite[Table~4]{AH-C}.
If $n \ge 40$, then $d_3(n,n-4)=2$ by Lemma~\ref{lem:n-i}~(ii).

It is known that 
$d^{\text{all}}_3(n,n-4) \le 3$ if $n\in \{11,12,\ldots,39\}$ (see~\cite{Br}).
Let $C_{36}$ be the ternary $[36,32]$ code with generator matrix
$
\left(
\begin{array}{ccccc}
I_{32} &  M_{36}  \\
\end{array}
\right),
$
where 
\[
M_{36}^T=
\left(
\begin{array}{ccccccccccccccc}
11100101110101111111011011011111\\
00210111220110102021120111121002\\
21222000011110210011122021200122\\
00001110001001011211111212122222
\end{array}
\right).
\]
We define the matrices $M_{i+4}$ ($i=31,30,\ldots,7$) by
deleting the last $32-i$ rows of $M_{36}$.
Then let $C_{i+4}$ ($i=7,8,\ldots,31$) be the ternary $[i+4,i]$ code 
with generator matrix
$
\left(
\begin{array}{ccccc}
I_{i} &  M_{i+4}  \\
\end{array}
\right).
$
We verified that $C_n$ is a ternary LCD $[n,n-4,3]$ code
for $n \in \{11,12,\ldots,36\}$.
In addition, our exhaustive computer search shows
that no ternary $[n,n-4,3]$ code is LCD for $n \in \{37,38,39\}$,
by using Method~I in Section~\ref{sec:d}.
By Lemma~\ref{lem:n-i}~(i),
there is a ternary LCD $[n,n-4,2]$ code for $n
\in \{37,38,39\}$.
The result follows.
\end{proof}

\bigskip
\noindent
{\bf Acknowledgment.}
This work was supported by JSPS KAKENHI Grant Number 19H01802.


\end{document}